\pgfplotsset{compat=newest}
\newtheorem{theorem}{Theorem}
\newtheorem{proposition}{Proposition}
\newtheorem{corollary}{Corollary}
\newtheorem{lemma}{Lemma}
\theoremstyle{definition}
\newtheorem{definition}{Definition}
\newtheorem{example}{Example}
\newtheorem{remark}{Remark}
\newcommand{\C}{\mathcal{C}}
\newcommand{\GL}{\mathrm{GL}}
\newcommand{\F}{\mathbb{F}}
\newcommand{\rk}{\mathrm{rk}}
\newcommand{\supp}{\mathrm{supp}}
\newcommand{\Aut}{\mathrm{Aut}}
\newcommand{\Fm}{\mathbb{F}_{q^m}}
\newcommand{\Fq}{\mathbb{F}_{q}}
\title{Invariants and Inequivalence of \\ Linear Rank-Metric Codes\thanks{This work was supported by SNF grant no.~169510 and by the German Israeli Project Cooperation (DIP) grant no.~KR3517/9-1.}}
\author[1]{Alessandro Neri}
\author[2]{Sven Puchinger}
\author[3]{Anna-Lena Horlemann-Trautmann}
\affil[1]{Department of Mathematics, University of Z{u}rich, Switzerland}
\affil[2]{Institute for Communications Engineering, Technical University of Munich (TUM), Germany}
\affil[3]{Faculty of Mathematics and Statistics, University of St. Gallen, Switzerland}
\newcommand{\Gal}{\mathrm{Gal}}
\newcommand{\NN}{\mathbb{N}}
\newcommand{\autom}{\sigma}
\newcommand{\Fqm}{\mathbb{F}_{q^m}}
\newcommand{\Fqr}{\mathbb{F}_{q^r}}
\newcommand{\Fqn}{\mathbb{F}_{q^n}}
\newcommand{\Code}{\mathcal{C}}
\def\ve#1{{\mathchoice{\mbox{\boldmath$\displaystyle #1$}}%
              {\mbox{\boldmath$\textstyle #1$}}%
              {\mbox{\boldmath$\scriptstyle #1$}}%
              {\mbox{\boldmath$\scriptscriptstyle #1$}}}}
\newcommand{\numTwists}{\ell}
\newcommand{\tVec}{{\ve t}}
\newcommand{\hVec}{{\ve h}}
\newcommand{\etaVec}{{\ve \eta}}
\newcommand{\alphaVec}{{\ve \alpha}}
\newcommand{\betaVec}{{\ve \beta}}
\newcommand{\Saut}{\mathcal S^{\autom}}
\newcommand{\CGab}[3]{\Code_{#1}^{#2}(#3)}
\newcommand{\CTGabSheekey}[3]{\Code_{#1}^{#2}(#3)}
\newcommand{\CTGab}[3]{\Code_{#1}^{#2}(#3)}
\newcommand{\CGabNew}[3]{\Code_{#1}^{#2}(#3)}
\newcommand{\A}{\ve{A}}
\newcommand{\G}{\ve{G}}
\renewcommand{\u}{\ve{u}}
\renewcommand{\v}{\ve{v}}
\newcommand{\dR}{\mathrm{d}_\mathrm{R}}
\begin{document}
\setlist[enumerate]{leftmargin=5.5mm}
\setlist[itemize]{leftmargin=5.5mm}
\renewcommand{\baselinestretch}{0.93}

\maketitle

\begin{abstract}
We show that the sequence of dimensions of the linear spaces, generated by a given rank-metric code together with itself under several applications of a field automorphism, is an invariant for the whole equivalence class of the code. These invariants give rise to an easily computable criterion to check if two codes are inequivalent. With this criterion we then derive bounds on the number of equivalence classes of classical and twisted Gabidulin codes.
\end{abstract}

\begin{IEEEkeywords}
Code Equivalence, Gabidulin Codes, Rank-Metric Codes, Twisted Gabidulin Codes
\end{IEEEkeywords}

\section{Introduction}
Rank-metric codes have received a lot of attention in the last decades, due to their applications in network coding, distributed storage and post-quantum cryptography. Optimal rank-metric codes, achieving the Singleton bound, are called \emph{maximum rank distance (MRD) codes.} The most prominent family of these codes, called Gabiulin codes,  was introduced, independently, by Delsarte \cite{Delsarte_1978}, Gabidulin \cite{Gabidulin_TheoryOfCodes_1985}, and Roth \cite{Roth_RankCodes_1991}. Several generalizations of this construction have been made afterwards, e.g. in \cite{kshevetskiy2005new}.
Since it was shown in \cite{neri2018genericity} that, for a sufficiently large extension field degree $m$,  there are plenty of MRD codes that are not (generalized) Gabidulin codes, it has been an ongoing research problem to find non-Gabidulin MRD codes.

A first answer in this direction was given by Sheekey in \cite{sheekey2015new}. His \emph{twisted Gabidulin codes} were the first construction of a family of non-Gabidulin MRD codes. At the same time a special case of them was also discovered in \cite{otal2016explicit}. This family has been generalized 
 in \cite[Remark~9]{sheekey2015new}, and in \cite{lunardon2015generalized,puchinger2017further,gabidulin2017new}.  
 Moreover, other non-Gabidulin MRD codes have been constructed; the references of which are ommitted due to space restrictions.
 
When looking for new MRD code constructions, it is important to understand if the new codes are equivalent to any of the already known codes.
Furthermore, knowing the equivalence classes of codes is also of interest in code-based cryptography. For instance, many attacks and distinguishers on a code can be directly applied to any other code in its equivalence class.

Hence, it is helpful to have easily computable criteria to check if codes belong to a given family. For (generalized) Gabidulin codes such a criterion, based on the dimension of the intersection of the code with itself under some field automorphism, was given in \cite{horlemann2015new}. In this work we generalize this result to more than just one application of the field automorphism, and derive results about the number of equivalence classes of certain MRD codes.
Compared to the criterion in \cite{horlemann2015new}, the new method is able to distinguish more classes of rank-metric codes from each other.

\section{Preliminaries}

\subsection{Rank-Metric Codes}
Let $q$ be a prime power,  
 $\Fq$ the finite field of cardinality $q$, and  $\Fm$ the extension field of degree $m$ of $\Fq$. In this work we consider codes in the vector space $\F_{q^m}^n$, equipped with the rank metric $\dR$, defined as
$\dR(\u,\v):=\dim_{\Fq}(\supp_q(\u-\v))$
for any $\u,\v \in \Fqm^n$, where 
$\supp_q(\u) := \langle u_1,\ldots, u_n \rangle_{\Fq}$ is  the \emph{$\Fq$-support} of $\u$.

A \emph{rank-metric code} is a subset $\C\subseteq \Fm^n$ endowed with the rank distance $\dR$. Such a rank-metric code is called \emph{$\Fm$-linear} if it is a subspace of $\Fm^n$. 
In this work we will only deal with rank-metric codes that are linear over $\Fm$.
 The integer $n$ is called the \emph{length} of $\C$, while the \emph{dimension} of $\C$ is the integer $k=\dim_{\Fm}(\C)$. The \emph{minimum distance} of $\C$ is $d(\C)=\min\{\dR(u,v) \mid u,v \in \C, u \neq v\}$. 
For a $k$-dimensional $\Fm$-linear rank metric code of length $n$ and minimum rank distance $d$, the Singleton bound \cite{Delsarte_1978,Gabidulin_TheoryOfCodes_1985,Roth_RankCodes_1991} states that 
$d\leq n-k+1$. 
Codes that meet this bound 
with equality are called \emph{maximum rank distance (MRD) codes}. It is well-known that $\Fqm$-linear MRD codes exist for any parameters if and only if $n\leq m$. For this reason, we will assume $n\leq m$ in this work.

\subsection{Equivalence of Rank-Metric Codes}

We say that two rank-metric codes in $\Fm^n$ are equivalent if there exists a rank isometry that maps one code onto the other.
Equivalent codes share many important  properties.
In the Hamming metric, it is known that deciding whether two codes are equivalent is a hard (but not NP-hard) problem \cite{petrank1997code}.

Denote by $\mathrm{Aut}(\F_{q^m})$ the \emph{automorphism group} of
$\F_{q^m}$, and by $\GL_n(q):=\{A\in \Fq^{n\times n} \mid \rk (A) =n\}$
 the \emph{general linear group} of degree $n$ over $\Fq$.
The \emph{semi-linear rank isometries}  on $\F_{q^m}^{n}$ (i.e., the distance-preserving mappings that are linear up to a field automorphism) are induced by the
isometries on $\Fq^{m\times n}$ and are hence well-known, see e.g.\
\cite{morrison2014equivalence}:
\begin{lemma}\cite[Proposition~2]{morrison2014equivalence}\label{isometries}
  The semilinear $\Fq$-rank isometries on $\Fm^{n}$ are of the
  form
  \[(\lambda, \A, \theta) \in \left( \Fm^* \times \GL_n(q) \right)
  \rtimes \mathrm{Aut}(\Fm) ,\] acting on $ \Fm^n \ni
  (v_1,\dots,v_n)$ via
  \[(v_1,\dots,v_n) (\lambda, \A, \theta) = (\theta(\lambda
  v_1),\dots,\theta(\lambda v_n)) \A .\] 
\end{lemma}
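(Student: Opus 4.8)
The plan is to prove the two directions separately: that every triple $(\lambda,\A,\theta)$ yields a semilinear $\Fq$-rank isometry, and that every such isometry arises this way. The first direction is a direct verification. The map factors as coordinatewise multiplication by $\lambda\in\Fm^*$, followed by coordinatewise application of $\theta$, followed by right multiplication by $\A$. Each factor is an additive bijection of $\Fm^n$, and the composition is $\theta$-semilinear over $\Fm$ by construction, so it only remains to see that each factor preserves $\dim_{\Fq}\supp_q$. Multiplication by $\lambda$ and application of $\theta$ are $\Fq$-semilinear bijections of $\Fm$ (the automorphism $\theta$ fixes the subfield $\Fq$ setwise, hence restricts to an automorphism of $\Fq$), so coordinatewise they map each $\Fq$-support to one of equal dimension; and $\supp_q(\v\A)\subseteq\supp_q(\v)$ with equality of dimension because $\A\in\GL_n(q)$ is invertible over $\Fq$. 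Thus $\dR$ is preserved.

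For the converse, which is the substance of the statement, I would work in the matrix model. Fixing an $\Fq$-basis of $\Fm$ identifies $\Fm^n$ with $\Fq^{m\times n}$, turning $\dR$ into the ordinary matrix-rank distance and turning $\Fm$-scalar multiplication by $\alpha$ into left multiplication by the matrix $M_\alpha$ of the regular representation of $\Fm$ over $\Fq$. Given a $\theta$-semilinear rank isometry $\phi$, I would first compose it coordinatewise with $\theta^{-1}$; by the first part this is again an isometry of the desired type, and the composite $\psi$ is now $\Fm$-linear. Being linear and distance-preserving, $\psi$ preserves the rank of every matrix, so the classical classification of bijective linear rank-preservers on $\Fq^{m\times n}$ (equivalently Hua's fundamental theorem of the geometry of rectangular matrices) applies: $\psi$ has the form $X\mapsto PXQ$ with $P\in\GL_m(q)$ and $Q\in\GL_n(q)$, together with the transpose variant $X\mapsto PX^\top Q$ when $m=n$.

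It remains to cut this group down using $\Fm$-linearity. The identity $\psi(M_\alpha X)=M_\alpha\psi(X)$ for all $\alpha$ and $X$ forces $PM_\alpha=M_\alpha P$ for all $\alpha\in\Fm$, i.e.\ $P$ lies in the centralizer of the regular representation of $\Fm$ inside $\Fq^{m\times m}$. Since $\Fm$ is a field, this centralizer is exactly $\{M_\mu:\mu\in\Fm\}$, so $P=M_\lambda$ for some $\lambda\in\Fm^*$; translating back, left multiplication by $M_\lambda$ is scalar multiplication by $\lambda$ while $Q=\A$ gives right multiplication by an $\Fq$-matrix, so $\psi(\v)=\lambda\v\A$. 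The same commutation requirement excludes the transpose branch when $n>1$, because transposition conjugates the left action of $M_\alpha$ into a right action, which is incompatible with the $\Fm$-module structure. Finally, reinserting the automorphism removed at the start recovers the stated map $\v\mapsto(\theta(\lambda v_1),\dots,\theta(\lambda v_n))\A$ via the semidirect-product action. The main obstacle is the converse's dependence on the geometry-of-matrices classification, and within it the centralizer computation: the step where commuting with all $M_\alpha$ forces $P$ to be a single scalar multiplication $M_\lambda$ is exactly where one uses that the regular representation generates a field rather than a larger subalgebra.
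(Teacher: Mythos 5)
This lemma is stated in the paper as a citation of \cite[Proposition~2]{morrison2014equivalence}; the paper itself contains no proof of it, so the comparison here is against the cited source rather than against anything internal to the paper. Your sketch is correct and in fact follows essentially the same route as Morrison's proof: the easy direction by direct verification (your observation that $\theta$ fixes $\Fq$ setwise, and that $\A\in\GL_n(q)$ preserves the $\Fq$-support dimension, is exactly what is needed), and the hard direction by passing to the matrix model $\Fq^{m\times n}$, invoking the fundamental theorem of the geometry of rectangular matrices (Hua/Wan) to get the form $X\mapsto PXQ$ (plus the transpose variant when $m=n$), and then using the centralizer of the regular representation to force $P=M_\lambda$. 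The one step I would tighten is the exclusion of the transpose branch: ``conjugates the left action into a right action'' is the right intuition, but the clean argument is that $\Fm$-linearity of $X\mapsto PX^\top Q$ forces $ZM_\alpha^\top=M_\alpha Z$ for all $Z\in\Fq^{m\times m}$, hence (taking $Z=I$) $M_\alpha^\top=M_\alpha$ and then that every $M_\alpha$ is central, i.e.\ scalar, which fails for $\alpha\notin\Fq$ as soon as $m>1$. With that detail filled in, your argument is a complete and standard proof of the cited result.
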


\begin{definition}
Two $\Fm$-linear rank-metric codes $\C, \C' \subseteq \Fm^n$ are \emph{semi-linearly equivalent} if there exists $(\lambda, \A, \theta) \in \left( \Fm^* \times \GL_n(q) \right)
  \rtimes \mathrm{Aut}(\Fm)$ such that  $\mathcal{C}' = \theta(\lambda \mathcal{C}) \A$.
\end{definition}
 In the sequel we can always assume $\lambda=1$, since a multiplication of $\C$ by a scalar in $\Fm$ does not change the code.
We remark that there exists a different notion of rank equivalence, when one considers rank-metric codes as $\Fq$-subspaces of $\Fq^{m\times n}$. We will not deal with this notion in this work, however, it is worth mentioning that for $\Fm$-linear codes the two notions coincide (see \cite[Proposition 2.5]{sheekey2018rank}).

\subsection{Known $\Fqm$-Linear Constructions}

\begin{definition}\label{def:codes}
Let $\alphaVec \in \Fqm^n$ be a vector whose entries $\alpha_i$ are linearly independent over $\Fq$, $1 \leq k \leq n$, and $\autom$ a generator of $ \Gal(\Fqm/\Fq)$.
\begin{enumerate}
\item A \emph{Gabidulin code} \cite{Delsarte_1978,Gabidulin_TheoryOfCodes_1985,Roth_RankCodes_1991} is defined by
\begin{align*}
\CGab{k}{\autom}{\alphaVec} := \left\langle \alphaVec, \autom(\alphaVec), \dots, \autom^{k-1}(\alphaVec) \right\rangle_{\Fqm}.
\end{align*}
\item Let $\eta \in \Fqm \setminus \{0\}$. The corresponding \emph{(Sheekey's) twisted Gabidulin code} \cite{sheekey2015new} is defined by
\begin{align*}
\CTGabSheekey{k, \eta}{\autom}{\alphaVec} := \left\langle \alphaVec+\eta \autom^k(\alphaVec), \autom(\alphaVec), \dots, \autom^{k-1}(\alphaVec) \right\rangle_{\Fqm}.
\end{align*}
\item Choose $\numTwists \in \NN$, which we call the \emph{number of twists}. Let $\hVec \in \{0,\dots,k-1\}^\numTwists$ and $\tVec \in \{1,\dots,n-k\}^\numTwists$ such that the $h_i$ are distinct and the $t_i$ are distinct. Furthermore, let $\etaVec \in (\Fqm \setminus \{0\})^\numTwists$. The corresponding \emph{(generalized) twisted Gabidulin code} \cite{puchinger2017further} is defined by
\begin{align*}
\CTGab{k,\tVec,\hVec,\etaVec}{\autom}{\alphaVec} := \big\langle\! \left\{\autom^{h_i}(\alphaVec)+\eta_i \sigma^{k-1+t_i}(\alphaVec) : i=1,\dots,\numTwists \right\} \\ 
\cup \left\{\autom^{i}(\alphaVec) : i \in \{0,\dots,k-1\} \setminus \{h_1,\dots,h_\ell\} \right\} \!\big\rangle_{\Fqm}.
\end{align*}
\item Let $\eta \in \Fqm \setminus \{0\}$ and restrict $k$ to satisfy $m-k \leq k$. Then, \emph{Gabidulin's new codes} \cite{gabidulin2017new} are defined by
\begin{align*}
\CGabNew{k,\eta}{\autom,\mathsf{new}}{\alphaVec}\! :=\! \big\langle \!\left\{ \sigma^{i}(\alphaVec)+\sigma^{i}(\eta) \sigma^{k+i}(\alphaVec) : 0\leq i < m-k \right\} \\
\cup \left\{ \sigma^i(\alphaVec) : m-k \leq i <k \right\} \!\big\rangle_{\Fqm}.
\end{align*}
\end{enumerate}
\end{definition}

Gabidulin codes are MRD for any choice of $\alphaVec$ \cite{Delsarte_1978,Gabidulin_TheoryOfCodes_1985,Roth_RankCodes_1991}.
Sheekey's twisted Gabidulin codes and Gabidulin's new codes are MRD if $\eta$ has norm $N(\eta) \neq (-1)^{km}$ \cite{sheekey2015new,gabidulin2017new}.
Twisted Gabidulin codes are a generalization of Sheekey's codes.
In general, there is a sufficient MRD condition if the $\alpha_i$ are chosen from a subfield $\Fqr \subseteq \Fqm$ with $r 2^\numTwists \mid m$ and a suitable choice of the $\eta_i$ \cite{puchinger2017further} (see also \cite[Chapter~7]{puchinger2018construction} for more details). Note that this gives codes of length $n \leq 2^{-\numTwists}m$. It is an open problem whether longer MRD codes exist for arbitrary $\tVec$ and $\hVec$.
In the special case $\numTwists=1$, we write $t := \tVec = t_1 \in \NN$ and $h := \hVec = h_1 \in \NN_0$.
Gabidulin's new codes can be seen as a special case of (generalized) twisted codes with
\begin{equation*}
\ell = m-k, \quad h_i = i-1, \quad t_i = i, \quad \text{and} \quad \eta_i = \sigma^{i-1}(\eta)
\end{equation*}
for $i=1,\dots,m-k$.

Note that all codes defined above are rank-metric codes of length $n$ and dimension $k$.
The dimension directly follows from the fact that $\alphaVec,\sigma(\alphaVec),\dots,\sigma^{i-1}(\alphaVec)$ are the rows of the \emph{$\autom$-Moore matrix}
\begin{align*}
\setlength{\arraycolsep}{2pt}
\def\arraystretch{0.6}
M_{i}^{\autom}(\alphaVec) :=
\begin{pmatrix}
\alphaVec \\
\autom(\alphaVec) \\
\vdots \\
\autom^{i-1}(\alphaVec) \\
\end{pmatrix}
=
\begin{pmatrix}
\alpha_1 & \cdots & \alpha_n \\
\autom(\alpha_1) & \cdots & \autom(\alpha_n) \\
\vdots & \ddots & \vdots \\
\autom^{i-1}(\alpha_1) & \cdots & \autom^{i-1}(\alpha_n)
\end{pmatrix},
\end{align*}
which are linearly independent over $\Fqm$ since the entries of $\alphaVec$ are linearly independent over $\Fq$. This is a direct consequence of the following statement, which is a generalization of \cite[Corollary 2.38]{lidl1997finite} and directly follows from \cite[Corollary 4.13]{lam1988vandermonde}.
\begin{proposition}\label{prop:rankMoore}
Let $\alphaVec \in \Fm^n$. Then $\rk(M_{i}^{\autom}(\alphaVec))=\min\{i,r\}$, where $r=\dim_{\Fq}\langle \alpha_1,\ldots, \alpha_n\rangle_{\Fq}$.
\end{proposition}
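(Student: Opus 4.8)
The plan is to prove the identity in two stages: first reduce to the case where the entries of $\alphaVec$ are $\Fq$-linearly independent, and then establish the result in that reduced case by exhibiting a nonsingular square submatrix of the correct size. For the reduction, I fix an $\Fq$-basis $\beta_1,\dots,\beta_r$ of $\langle \alpha_1,\dots,\alpha_n\rangle_{\Fq}$ and write $\alpha_j = \sum_{l=1}^r c_{lj}\beta_l$ with $c_{lj}\in\Fq$. The crucial point is that $\autom$ fixes $\Fq$ pointwise, so $\autom^s(\alpha_j) = \sum_l c_{lj}\,\autom^s(\beta_l)$ for every $s$; consequently each column of $M_i^{\autom}(\alphaVec)$ is the same $\Fq$-combination of the columns of $M_i^{\autom}(\betaVec)$, i.e. $M_i^{\autom}(\alphaVec) = M_i^{\autom}(\betaVec)\,C$ with $C=(c_{lj})\in\Fq^{r\times n}$. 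Since $\beta_1,\dots,\beta_r$ is a basis of the span of the $\alpha_j$, the matrix $C$ has full row rank $r$, and right multiplication by a full-row-rank matrix preserves the rank of the left factor. Hence $\rk\!\left(M_i^{\autom}(\alphaVec)\right)=\rk\!\left(M_i^{\autom}(\betaVec)\right)$, and it suffices to treat $\betaVec\in\Fm^r$ with $\Fq$-independent entries.

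In the reduced case the matrix $M_i^{\autom}(\betaVec)$ is $i\times r$, so its rank is trivially at most $\min\{i,r\}$; the work is the matching lower bound. Here I would split into two subcases and produce a square submatrix of size $\min\{i,r\}$ whose nonsingularity I can invoke. If $i\le r$, I select columns $1,\dots,i$, obtaining the square matrix $M_i^{\autom}\big((\beta_1,\dots,\beta_i)\big)$; if $i>r$, I select rows $0,\dots,r-1$, obtaining the square matrix $M_r^{\autom}(\betaVec)$. In both cases the relevant $\beta$-entries remain $\Fq$-linearly independent, so everything comes down to the following key fact: a square $\autom$-Moore matrix $M_s^{\autom}\big((\beta_1,\dots,\beta_s)\big)$ with $\Fq$-independent entries is invertible.

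To prove this key fact I would argue by contradiction: an $\Fm$-linear dependence among the rows yields coefficients $c_0,\dots,c_{s-1}\in\Fm$, not all zero, with $\sum_{j=0}^{s-1}c_j\,\autom^j(\beta_l)=0$ for all $l$. Viewing $P:=\sum_{j=0}^{s-1}c_j\,\autom^j$ as an $\Fq$-linear operator on $\Fm$, it then vanishes on the whole $s$-dimensional $\Fq$-space $V=\langle \beta_1,\dots,\beta_s\rangle_{\Fq}$, so $\dim_{\Fq}\ker P\ge s$. By Dedekind independence of the distinct automorphisms $\autom^0,\dots,\autom^{s-1}$ (distinct since $s\le r\le n\le m$) the operator $P$ is nonzero and has $\autom$-degree at most $s-1$, and the theory of $\autom$-polynomials forces $\dim_{\Fq}\ker P\le s-1$, a contradiction. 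I expect this last kernel-dimension bound to be the main obstacle, and the reason the statement is credited to \cite{lam1988vandermonde} rather than proved by elementary root counting: when $\autom$ is a general generator of $\Gal(\Fm/\Fq)$ rather than the Frobenius $x\mapsto x^q$, rewriting $P$ as an ordinary polynomial and bounding its roots gives only $\dim_{\Fq}\ker P\le a(s-1)$ for $\autom=(x\mapsto x^{q^a})$, which is too weak. The clean route is through the skew polynomial ring $\Fm[x;\autom]$, where one shows that the $\Fq$-subspace $V$ admits a monic annihilating $\autom$-polynomial of degree exactly $\dim_{\Fq}V=s$ that left-divides every $\autom$-polynomial vanishing on $V$; this yields $\autom\text{-}\deg P\ge s$ and closes the argument, and it is precisely \cite[Corollary~4.13]{lam1988vandermonde} specialized to this setting.
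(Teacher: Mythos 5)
Your proof is correct, but it does considerably more than the paper itself: the paper offers no argument for Proposition~\ref{prop:rankMoore} at all, asserting only that it is a generalization of \cite[Corollary 2.38]{lidl1997finite} and ``directly follows from \cite[Corollary 4.13]{lam1988vandermonde}.'' Your two-stage route is sound and complete: the factorization $M_i^{\autom}(\alphaVec)=M_i^{\autom}(\betaVec)\,C$ with $C\in\Fq^{r\times n}$ of full row rank is valid precisely because $\autom$ fixes $\Fq$ entrywise (so the same $\Fq$-combination works in every row), and right-multiplication by a matrix with a right inverse over $\Fm$ indeed preserves rank; the selection of a $\min\{i,r\}\times\min\{i,r\}$ Moore submatrix with $\Fq$-independent entries then reduces everything to the nonsingularity of square Moore matrices, and your kernel-dimension argument for that step is exactly right: a nonzero $\autom$-polynomial of degree at most $s-1$ cannot vanish on an $s$-dimensional $\Fq$-subspace, which holds because the fixed field of $\autom$ (a generator of $\Gal(\Fm/\Fq)$) is exactly $\Fq$. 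You also correctly diagnose why this last step is the genuine crux and why naive root counting fails for $\autom:x\mapsto x^{q^a}$ with $a>1$ --- this is precisely the content of the Lam--Leroy corollary the paper invokes, so your proof isolates and fills in the details the paper delegates to the literature. Two cosmetic remarks: the appeal to Dedekind independence is superfluous, since the kernel bound applies to any nonzero skew polynomial regardless of whether you first check the operator is nonzero; and with the usual convention that multiplication in the skew ring corresponds to composition of operators, the minimal annihilating polynomial of $V$ \emph{right}-divides every polynomial vanishing on $V$ (every annihilator is a \emph{left} multiple of it), not ``left-divides'' as you wrote --- neither point affects the degree inequality you need.
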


\section{New Invariants in the Rank Metric}

\subsection{General Results}
In recent works it has been noticed that the dimension of the code $\C + \sigma(\C)$ is an invariant of a code $\C$ under rank-metric equivalences. In \cite{horlemann2015new} this was used to derive a criterion for checking whether a given code is Gabidulin or not. In \cite{puchinger2018construction} it was used to show that some twisted Gabidulin codes are inequivalent to known constructions. We generalize this invariant here.

\begin{lemma}\label{lem:invariant}
 Let $0<s_1<\ldots< s_r<m$ be positive integers, and let $\C_1,\C_2$ be two equivalent $\Fm$-linear rank metric codes. Then, $\C_1' := \C_1+\autom^{s_1}(\C_1)+\ldots+\autom^{s_r}(\C_1)$ is equivalent to $\C_2' := \C_2+\autom^{s_1}(\C_2)+\ldots+\autom^{s_r}(\C_2)$.
In particular, $\dim \C_1' = \dim \C_2'$.
\end{lemma}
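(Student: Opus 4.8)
The plan is to track a single fixed isometry through the whole construction. By Lemma~\ref{isometries} together with the observation that we may take $\lambda=1$, equivalence of $\C_1$ and $\C_2$ means there is a pair $(\A,\theta)\in\GL_n(q)\rtimes\Aut(\Fm)$ with $\C_2=\theta(\C_1)\A$. I would then prove that the \emph{same} pair maps $\C_1'$ onto $\C_2'$; since semilinear isometries are $\Fm$-semilinear bijections, this immediately gives both the asserted equivalence and the equality of dimensions.

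Two commutation facts drive the argument. First, $\A$ has all of its entries in $\Fq$, which $\autom$ fixes pointwise, so $\autom^{s}(\v\A)=\autom^{s}(\v)\A$ for every $\v\in\Fm^n$; that is, $\autom^{s}$ commutes with right multiplication by $\A$. Second, and this is the real crux, $\autom$ and $\theta$ commute as automorphisms of $\Fm$. This holds because both belong to $\Aut(\Fm)=\Gal(\Fm/\F_p)$ (writing $q=p^e$), which is cyclic and hence abelian. With these two commutations, applying $\autom^{s_j}$ to $\C_2=\theta(\C_1)\A$ yields
\[
\autom^{s_j}(\C_2)=\autom^{s_j}\bigl(\theta(\C_1)\bigr)\A=\theta\bigl(\autom^{s_j}(\C_1)\bigr)\A .
\]

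Finally I would assemble the sum. Writing $\phi(\v):=\theta(\v)\A$ for the $\theta$-semilinear isometry, $\phi$ is additive, and the image of an $\Fm$-subspace under it is again an $\Fm$-subspace (for $c\in\Fm$ one has $c\,\phi(\v)=\phi(\theta^{-1}(c)\v)$). Hence $\phi(U+V)=\phi(U)+\phi(V)$ for subspaces $U,V$. Summing the displayed identities over $j=0,1,\dots,r$ (with the convention $s_0:=0$, $\autom^0=\mathrm{id}$) and pulling $\phi$ outside the sum gives
\[
\C_2'=\phi\bigl(\C_1+\autom^{s_1}(\C_1)+\dots+\autom^{s_r}(\C_1)\bigr)=\phi(\C_1')=\theta(\C_1')\A ,
\]
so $\C_1'$ and $\C_2'$ are equivalent via $(\A,\theta)$ and in particular $\dim\C_1'=\dim\C_2'$.

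I expect the only genuinely delicate point to be the commutation of $\autom$ and $\theta$: an arbitrary $\theta\in\Aut(\Fm)$ need \emph{not} fix $\Fq$ (e.g.\ $\theta$ could be the base Frobenius $x\mapsto x^p$), so the naive argument ``$\theta$ commutes with $\autom$ because it fixes $\Fq$'' fails. One must instead invoke that the full automorphism group of a finite field is cyclic, which forces any two of its elements to commute. Everything else reduces to routine bookkeeping about how a semilinear map distributes over sums of subspaces.
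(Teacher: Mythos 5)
Your proof is correct and follows essentially the same route as the paper: apply $\autom^{s_j}$ to the equivalence relation, use that $\A$ has entries in $\Fq$ (fixed by $\autom$) and that $\Aut(\Fm)$ is abelian to commute $\autom^{s_j}$ past $\theta$ and $\A$, then factor the semilinear map out of the sum. Your closing remark about the delicate point is exactly the paper's justification for its step $(*)$ --- ``the entries of $\A$ belong to $\Fq$ and $\Aut(\Fm)$ is an abelian group'' --- so the two arguments coincide even in that detail.
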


\begin{proof}
 Since $\C_1$ and $\C_2$ are equivalent, there exist $\theta \in \Aut(\Fm)$, $\A\in \GL_n(\Fq)$ such that
$\C_1=\theta(\C_2)\A$. Therefore,
\begin{align*}
&\C_1+\autom^{s_1}(\C_1)+\ldots+\autom^{s_r}(\C_1) \\
&=\theta(\C_2)\A+\autom^{s_1}(\theta(\C_2))\autom^{s_1}(\A)+\ldots+\autom^{s_r}(\theta(\C_2))\autom^{s_r}(\A) \\
&\stackrel{(*)}{=} \theta(\C_2)\A+\theta(\autom^{s_1}(\C_2))\A+\ldots+\theta(\autom^{s_r}(\C_2))\A  \\
&=\theta(\C_2+\autom^{s_1}(\C_2)+\ldots+\autom^{s_r}(\C_2))\A,
\end{align*}
where (*) holds because the entries of $\A$ belong to $\Fq$ and $\Aut(\Fm)$ is an abelian group.
\end{proof}

Lemma~\ref{lem:invariant} implies that if two $\Fqm$-linear codes $\C_1,\C_2$ have different dimension of $\C_1'$ and $\C_2'$, then they must be inequivalent. Hence, checking the dimensions of $\C_1'$ and $\C_2'$ for different choices of the powers $s_i$ gives a sufficient condition for codes to be inequivalent.
It is notable that computing the dimension of $\C+\autom^{s_1}(\C)+\ldots+\autom^{s_r}(\C)$ of a code $\C$ with generator matrix $\G$ can be done by computing the rank of the $(r+1)k \times n$ matrix $(\G^\top,\sigma^{s_1}(\G)^\top, \dots, \sigma^{s_r}(\G)^\top)^\top$,
which costs at most $O(\max\{r^2k^2n,n^2 r k\})$ field operations.
In the following, we restrict to the special case of consecutive $s_i = i$ for the sake of easier proofs.

\begin{definition}
Let $\C$ be a code of length $n$ and dimension $k$, and let $\autom \in \Gal(\Fm/\Fq)$. We consider the \emph{$\autom$-sequence of codes} $\{\mathcal S^\autom_i(\C)\}_{i=0}^m$,  defined as the sum of $\Fm$-vector spaces
\begin{equation*}
\mathcal S^\autom_i(\C)=\sum_{j=0}^i \autom^j(\C),
\end{equation*}
 and the associate sequence of integers $\{s^\autom_i(\C)\}_{i=0}^m$ given by the dimensions of the codes, i.e.
$$s^\autom_i(\C)=\dim (\mathcal S^\autom_i(\C)).$$
\end{definition}

As a consequence of Lemma \ref{lem:invariant}, we get that the sequence $\{s^\autom_i(\C)\}$ is an invariant of linear rank metric codes, i.e., it is stable under code equivalence.

\begin{proposition}\label{prop:propertiesSi}
Let $\C\subseteq \Fm^n$ be a $\Fm$-linear rank metric code of dimension $k$. Then:
\begin{enumerate}
\item $\{s^\autom_i(\C)\}$ is a non-decreasing sequence of integers between $k$ and $n$.
\item $\mathcal S^\autom_{i+j}(\C)=\mathcal S^\autom_i(\mathcal S^\autom_j(\C))$.
\item $s^\autom_i(\C)=s^\autom_{i+1}(\C)$ if and only if $\mathcal S^\autom_i(\C)$ has a basis of elements in $\Fq^n$, i.e., a generator matrix over $\Fq$.
\item If $s^\autom_i(\C)=s^\autom_{i+1}(\C)$ then $s^\autom_{i+j}(\C)=s^\autom_i(\C)$ for all $j \geq 0$.
\item $s^\autom_{n-k}(\C)=s^\autom_{n-k+j}(\C)$ for all $j \geq 0$.
\item $s^\autom_{i+j+1}(\C)-s^\autom_{i+j}(\C)\leq s^\autom_{i+1}(\C)-s^\autom_{i}(\C)$ for all $j\geq 0$.
\item $s^\autom_{i+1}(\C)-s^\autom_{i}(\C)\leq k$ for all $i\geq 0$.
\end{enumerate}
\end{proposition}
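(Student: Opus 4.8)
The plan is to prove the seven items in the given order, abbreviating $\mathcal S_i := \mathcal S^\sigma_i(\mathcal C)$ and $s_i := \dim \mathcal S_i$, and recording two identities used throughout: $\mathcal S_{i+1} = \mathcal S_i + \sigma^{i+1}(\mathcal C)$ and $\sigma(\mathcal S_i) = \sum_{j=1}^{i+1}\sigma^j(\mathcal C)$, both immediate from the definition. Item (1) is then immediate: the nesting $\mathcal S_i \subseteq \mathcal S_{i+1}$ gives monotonicity, $\mathcal S_0 = \mathcal C$ gives $s_0 = k$, and $\mathcal S_i \subseteq \mathbb{F}_{q^m}^n$ gives the upper bound $n$. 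Item (2) is a direct computation: expanding $\mathcal S^\sigma_i(\mathcal S^\sigma_j(\mathcal C)) = \sum_{b=0}^{i}\sum_{a=0}^{j}\sigma^{a+b}(\mathcal C)$ and noting that $a+b$ ranges over exactly $\{0,\dots,i+j\}$ identifies it with $\mathcal S_{i+j}$.

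For item (3) I would first rephrase the equality $s_i = s_{i+1}$, which by nesting is equivalent to $\mathcal S_i = \mathcal S_{i+1}$, as a $\sigma$-invariance statement. Using $\sigma(\mathcal S_i) = \sum_{j=1}^{i+1}\sigma^j(\mathcal C)$ one checks that $\mathcal S_i = \mathcal S_{i+1}$ holds if and only if $\sigma(\mathcal S_i) = \mathcal S_i$ (each containment is upgraded to equality because the semilinear bijection $\sigma$ preserves $\mathbb{F}_{q^m}$-dimension). The remaining content is the standard Galois-descent fact that a $\sigma$-invariant $\mathbb{F}_{q^m}$-subspace of $\mathbb{F}_{q^m}^n$ admits a basis of vectors in $\mathbb{F}_q^n$, while conversely any space with an $\mathbb{F}_q$-rational basis is fixed by $\sigma$. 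Item (4) then follows at once: by item (3), equality at step $i$ makes $\mathcal S_i$ $\sigma$-invariant, so $\sigma^a(\mathcal C) \subseteq \sigma^a(\mathcal S_i) = \mathcal S_i$ for all $a \geq 0$, whence $\mathcal S_{i+j} \subseteq \mathcal S_i$ and the dimensions agree. Item (5) is a counting argument on top of item (4): by item (4) the sequence strictly increases up to some index and is constant thereafter, and a strictly increasing run starting from $s_0 = k$ and bounded by $n$ can last at most $n-k$ steps, so the sequence must already be constant from index $n-k$ onwards.

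The main obstacle is item (6), the concavity of the sequence. The plan is to express each increment through the dimension formula $\dim(A+B) = \dim A + \dim B - \dim(A\cap B)$ as
\[
\delta_i := s_{i+1} - s_i = k - \dim\big(\mathcal S_i \cap \sigma^{i+1}(\mathcal C)\big),
\]
using $\dim \sigma^{i+1}(\mathcal C) = k$. It then suffices to prove $\delta_{i+1} \leq \delta_i$ and iterate, which amounts to the inequality $\dim(\mathcal S_i \cap \sigma^{i+1}(\mathcal C)) \leq \dim(\mathcal S_{i+1} \cap \sigma^{i+2}(\mathcal C))$. The key trick is to apply the dimension-preserving bijection $\sigma$: since $\sigma(A \cap B) = \sigma(A) \cap \sigma(B)$, the left-hand dimension equals $\dim(\sigma(\mathcal S_i) \cap \sigma^{i+2}(\mathcal C))$, and because $\sigma(\mathcal S_i) = \sum_{j=1}^{i+1}\sigma^j(\mathcal C) \subseteq \mathcal S_{i+1}$ this intersection sits inside $\mathcal S_{i+1} \cap \sigma^{i+2}(\mathcal C)$, giving the claimed inequality. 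Finally, item (7) is the one-line consequence $\delta_i = k - \dim(\mathcal S_i \cap \sigma^{i+1}(\mathcal C)) \leq k$, since the intersection has nonnegative dimension.
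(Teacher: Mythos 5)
Your proposal is correct, and items (1)--(5) follow essentially the paper's own route: the nesting argument for (1), the double-sum expansion for (2), the reduction of (3) to $\sigma$-invariance of $\mathcal S^\autom_i(\C)$ followed by Galois descent (the paper outsources exactly this descent fact to \cite[Lemma 4.5]{horlemann2015new}), and the same strict-increase counting for (5). Where you genuinely diverge is in (6)--(7). The paper sets $\ell = s^\autom_{i+1}(\C)-s^\autom_i(\C)$, chooses a decomposition $\autom(\mathcal S^\autom_i(\C)) = W + U$ with $W \subseteq \mathcal S^\autom_i(\C)$ and $U$ an $\ell$-dimensional complement, shows $\mathcal S^\autom_{i+2}(\C) = \mathcal S^\autom_{i+1}(\C) + \autom(U)$ so that the next increment is at most $\ell$, and then proves (7) separately from the base case $s^\autom_1(\C) \leq s^\autom_0(\C) + k$ combined with (6). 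You instead express each increment via the modular dimension formula as $\delta_i = k - \dim\bigl(\mathcal S^\autom_i(\C) \cap \autom^{i+1}(\C)\bigr)$ and prove the intersection dimensions are non-decreasing by applying the dimension-preserving bijection $\autom$ together with the containment $\autom(\mathcal S^\autom_i(\C)) \subseteq \mathcal S^\autom_{i+1}(\C)$. Both arguments rest on the same two pillars (part (2) and the fact that $\autom$ preserves dimensions of $\Fm$-subspaces), but yours avoids the choice of a complement $U$, recasts concavity as monotonicity of a sequence of intersection dimensions, and yields (7) for free since the intersection has nonnegative dimension -- a slightly cleaner package than the paper's two-step treatment.
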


\begin{proof}
\begin{enumerate}
\item Follows from $\Saut_i(\C)\subseteq \Saut_{i+1}(\C) \subseteq \Fm^n$.
\item It holds that $\Saut_i(\Saut_j(\C))=\sum_{\ell=0}^i\autom^\ell(\Saut_j(\C))=\sum_{\ell=0}^i\sum_{r=0}^j \autom^{\ell+r}(\C)=\sum_{s=0}^{i+j}\sigma^s(\C)=\Saut_{i+j}(\C)$.
\item Suppose $s^\autom_i(\C)=s^\autom_{i+1}(\C)$, then $\Saut_i(\C)=\Saut_{i+1}(\C)$, and by part 2, we get $\Saut_1(\Saut_i(\C))=\Saut_i(\C)$. This is true if and only if $\autom(\Saut_i(\C))=\Saut_i(\C)$, and we can conclude using \cite[Lemma 4.5]{horlemann2015new}.
\item $s^\autom_i(\C)=s^\autom_{i+1}(\C)$ implies that $\autom(\Saut_i(\C))=\Saut_i(\C)$, and therefore, $\Saut_{i+j}(\C)=\autom^j(\Saut_i(\C))=\Saut_i(\C)$
 for all $j \geq 0$. 
\item Given a $k$-dimensional code $\C$, let $r^\autom(\C)=\min\{i \mid s^\autom_i(\C)=s^\autom_{i+1}(\C) \}$. If $r^\autom(\C)\leq n-k$, then by part 4 we can conclude. Suppose by contradiction that $r(\C)>n-k$. Then we get a chain 
$k=s^\autom_0(\C)<s^\autom_1(\C)< \ldots <s^\autom_{n-k}(\C)<s^\autom_{n-k+1}(\C).$
This implies that $s^\autom_i(\C) \geq k+i$, and in particular $s^\autom_{n-k+1}(\C)\geq k+n-k+1=n+1$, but this is impossible since $\Saut_{n-k+1}(\C)\subseteq \Fm^n$.
\item We prove it for $j=1$, and the general case follows then by induction. Suppose $s^\autom_{i+1}(\C)=s^\autom_{i}(\C)+\ell$. Then $\dim (\Saut_i(\C)+\autom(\Saut_i(\C)))=\dim (\Saut_i(\C))+\ell$. This implies that $\autom(\Saut_i(\C))=W+U$, where $W\subseteq \Saut_i(\C)$, $U\cap \Saut_i(\C)=\{0\}$ and $\dim U=\ell$. Hence, $\Saut_{i+2}(\C)=\Saut_1(\Saut_{i+1}(\C))=\Saut_{i+1}(\C)+\autom(\Saut_{i+1}(\C))=\Saut_i(\C)+U+\autom(\Saut_i(\C))+\autom(U)$. However, $U \subseteq \autom(\Saut_i(\C))$, and therefore $\Saut_{i+2}(\C)=\Saut_i(\C)+\autom(\Saut_i(\C))+\autom(U)=\Saut_{i+1}(\C)+\autom(U)$. Since $\dim \autom(U)=\dim(U)=\ell$, we conclude. 
\item We have $\Saut_1(\C)=\C+\autom(\C)$ and thus $s^\autom_1(\C)=\dim(\C+\autom(\C))\leq \dim(\C)+\dim(\autom(\C))=s^\autom_0(\C)+k$. The statement follows with part 6. \qedhere
\end{enumerate}
\end{proof}

\begin{remark}
The sequence $\{\mathcal{S}_i^{\autom}(\C)\}_{i=0}^{m}$ was already considered in \cite{overbeck2008structural}, and following work, to retrieve the structure of a Gabidulin code (i.e., the vector $\alphaVec$) from an obfuscated generator matrix thereof, which led to an efficient attack on a cryptosystem based on Gabidulin codes. For the same purpose, it was shown in \cite[Proposition~2]{coggia2019security} that for a random $k$-dimensional code $\C$, we have $s^\autom_i(\C) = \min\{n,(i+1)k\}$ with high probability.
 To the best of our knowledge, it has so far not been used to study inequivalences of rank-metric codes. 
\end{remark}

In the following subsections, we explicitly compute the sequences $\{s^\autom_i(\C)\}_{i=0}^m$ some code families. 
In some cases this implies the exact number of pairwise inequivalent codes in the respective code family.

\subsection{The Sequence for Gabidulin Codes}

\begin{proposition}\label{prop:GabidulinSeq}
Let $\C := \CGab{k}{\sigma}{\alphaVec}$ be a Gabidulin code of length $n$ and dimension $k$, as defined in Definition~\ref{def:codes},
and $i \in \mathbb N$. 
\begin{enumerate}
\item If $0 \leq \ell \leq k$, then $\mathcal S^{\autom^{\ell}}_i(\C)=\Saut_{i\ell}(\C)= \CGab{k+i\ell}{\autom}{\alphaVec}$ and $s^{\autom^\ell}_i(\C)=\min\{k+i\ell,n\}$.
\item If $m-k \leq \ell <m$, then $\mathcal S^{\autom^{\ell}}_i(\C)= \CGab{k+i(m-\ell)}{\autom^{-1}}{\autom^{k-1}(\alphaVec)}$ and $s^{\autom^\ell}_i(\C)=\min\{k+i(m-\ell),n\}$.
\item If $m=n$, and $0 \leq \ell \leq m-1$ , then $s^{\autom^\ell}_1(\C)=\min\{r,n\}$, where
$$r=\begin{cases} k+\ell  & \mbox{ if }  0\leq \ell \leq k\\
 k+m-\ell & \mbox{ if } m-k\leq \ell \leq m-1\\
2k & \mbox{ if } k+1 \leq \ell \leq m-k-1 \end{cases} .$$
\end{enumerate}
\end{proposition}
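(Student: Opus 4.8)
The plan is to reduce all three statements to the combinatorics of which powers $\autom^p(\alphaVec)$ appear as generators, combined with the Moore-matrix rank formula (Proposition~\ref{prop:rankMoore}). Since $\autom$ acts coordinatewise and sends an $\Fqm$-span to the span of the images, we have $\autom^{\ell j}(\C) = \langle \autom^{\ell j}(\alphaVec), \dots, \autom^{\ell j + k - 1}(\alphaVec)\rangle_{\Fqm}$, so $\mathcal S^{\autom^\ell}_i(\C) = \sum_{j=0}^i \autom^{\ell j}(\C)$ is the $\Fqm$-span of $\{\autom^p(\alphaVec) : p \in E\}$ with $E = \bigcup_{j=0}^i [\ell j,\, \ell j + k - 1]$ (integer indices for parts 1--2, read modulo $m$ in part 3 where $n=m$). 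In all cases the $\alpha_i$ are $\Fq$-linearly independent, so $\dim_{\Fq}\langle \alpha_1,\dots,\alpha_n\rangle = n$ and every span $\langle \autom^{a}(\alphaVec), \dots, \autom^{a+c-1}(\alphaVec)\rangle_{\Fqm}$ has dimension $\min\{c,n\}$ by Proposition~\ref{prop:rankMoore} applied to the shifted vector $\autom^a(\alphaVec)$.

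For part 1, assume $0 \le \ell \le k$. Then consecutive intervals $[\ell j, \ell j + k - 1]$ and $[\ell(j{+}1), \ell(j{+}1)+k-1]$ are contiguous or overlap, since the latter starts at $\ell j + \ell \le \ell j + k$. Hence $E$ collapses to the single interval $[0,\, i\ell + k - 1]$, giving $\mathcal S^{\autom^\ell}_i(\C) = \langle \alphaVec, \dots, \autom^{i\ell + k - 1}(\alphaVec)\rangle_{\Fqm} = \CGab{k+i\ell}{\autom}{\alphaVec}$, which is visibly also $\Saut_{i\ell}(\C)$. The dimension is the rank of $M_{k+i\ell}^{\autom}(\alphaVec)$, equal to $\min\{k+i\ell, n\}$ by Proposition~\ref{prop:rankMoore}.

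Part 2 is the mirror image. Writing $\ell' := m - \ell$ with $0 < \ell' \le k$ and using $\autom^m = \mathrm{id}$, we get $\autom^{\ell j} = \autom^{-\ell' j}$, so $E = \bigcup_{j=0}^i [-\ell' j,\, -\ell' j + k - 1]$. The same contiguity argument, now running downwards because $\ell' \le k$, collapses $E$ to $[-i\ell',\, k-1]$. Reading this interval from the top, the vectors $\autom^{k-1}(\alphaVec), \autom^{k-2}(\alphaVec), \dots, \autom^{-i\ell'}(\alphaVec)$ are exactly $\betaVec, \autom^{-1}(\betaVec), \dots, \autom^{-(k+i\ell'-1)}(\betaVec)$ for $\betaVec := \autom^{k-1}(\alphaVec)$, identifying the span as $\CGab{k+i(m-\ell)}{\autom^{-1}}{\autom^{k-1}(\alphaVec)}$. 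Since $\autom^{-1}$ also generates $\Gal(\Fqm/\Fq)$ and $\autom^{k-1}$ preserves $\Fq$-independence of the entries, Proposition~\ref{prop:rankMoore} again yields dimension $\min\{k + i(m-\ell), n\}$.

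Part 3 is where the real work lies, because now $i=1$ but $n=m$ forces me to track indices modulo $m$ and handle wraparound. The key simplification is that, with $n=m$, the square matrix $M_m^{\autom}(\alphaVec)$ has rank $m$ by Proposition~\ref{prop:rankMoore}, so $\{\autom^p(\alphaVec)\}_{p=0}^{m-1}$ is an $\Fqm$-basis of $\Fqm^m$. Consequently $s_1^{\autom^\ell}(\C) = \dim(\C + \autom^\ell(\C))$ equals the number of distinct residues modulo $m$ in $A \cup B$, where $A = \{0,\dots,k-1\}$ and $B = \{\ell,\dots,\ell+k-1\}$ are two length-$k$ arcs on $\mathbb Z/m\mathbb Z$. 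I would compute $|A \cup B| = 2k - |A\cap B|$ via the circular-overlap identity $|A \cap B| = \max\{0,\, k-\ell\} + \max\{0,\, k - m + \ell\}$, the two terms counting overlap at the two ends of the circle. In the three ranges of $\ell$ exactly one or neither term survives, producing $k+\ell$, $2k$, and $k+m-\ell$; since $|A\cup B|\le m = n$ always, this is $\min\{r,n\}$. The main obstacle is the bookkeeping when $2k>m$, where the stated ranges overlap and both overlap terms are active: there I would observe that the raw value $r$ exceeds $m$ while $|A\cup B| = m$, so the $\min$ with $n$ reconciles the formulas and keeps the three cases mutually consistent.
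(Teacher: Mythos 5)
Your proof is correct and takes essentially the same route as the paper's: identify $\mathcal{S}^{\autom^\ell}_i(\C)$ as the span of one block (part 1) or, after the reflection $\CGab{k}{\autom}{\alphaVec} = \CGab{k}{\autom^{-1}}{\autom^{k-1}(\alphaVec)}$, of a mirrored block (part 2) of consecutive $\autom$-powers of $\alphaVec$, and read off all dimensions from Proposition~\ref{prop:rankMoore}. Your part 3 recasts the paper's observation that the generators are distinct rows of the full-rank Moore matrix $M_{m}^{\autom}(\alphaVec)$ as a circular-arc overlap count modulo $m$, which is the same idea with more uniform bookkeeping (and it makes explicit the $2k>m$ case where the stated ranges of $\ell$ overlap, which the paper silently delegates to the $\min\{r,n\}$ in the statement).
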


\begin{proof}
\begin{enumerate}
\item Let $0\leq \ell\leq k$. Then $\mathcal S^{\autom^\ell}_i(\C)=\langle \alphaVec,\ldots,\autom^{k-1}(\alphaVec),\autom^k(\alphaVec),\ldots, \autom^{k+i\ell-1}(\alphaVec)\rangle = \CGab{k+i\ell}{\autom}{\alphaVec}$. The computation of $s^\autom_i(\C)$ follows from Proposition \ref{prop:rankMoore}.
\item  If $m-k \leq \ell \leq m-1$, then the claim follows considering that $\CGab{k}{\sigma}{\alphaVec} = \CGab{k}{\sigma^{-1}}{\autom^{k-1}(\alphaVec)}$, and applying $\autom^\ell=(\autom^{-1})^{m-\ell}$, with $0\leq m-\ell \leq k$. The computation of $s^\autom_i(\C)$ follows again from Proposition \ref{prop:rankMoore}.
\item If $0\leq \ell\leq k$ or $m-k \leq \ell \leq m-1$, it follows from part 1. On the other hand, if $ k+1 \leq \ell \leq m-k-1$, then 
$\mathcal S^{\autom^\ell}_1(\C)=\langle \alphaVec, \ldots, \autom^{k-1}(\alphaVec),\autom^{\ell}(\alphaVec),\ldots, \autom^{k+\ell-1}(\alphaVec)\rangle$, and  the generators of $\mathcal S^{\autom^\ell}_1(\C)$ are all distinct rows of the Moore matrix $M_{m}^{\sigma}(\alphaVec)$. Hence, we conclude using Proposition \ref{prop:rankMoore}. \qedhere
\end{enumerate}
\end{proof}

\begin{theorem}\label{thm:numberGabidulin}
Let $1<k<n-1$, and $m=n$. Then there are exactly $\frac{\phi(m)}{2}$ inequivalent Gabidulin codes of length $m$ and dimension $k$, where $\phi$ denotes Euler's totient function.
\end{theorem}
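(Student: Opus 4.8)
The plan is to parametrize all Gabidulin codes of length $m=n$ and dimension $k$, collapse the two obvious sources of equivalence to get an upper bound of $\phi(m)/2$ classes, and then use the invariant sequence of Proposition~\ref{prop:GabidulinSeq}(3) to show that no further collisions occur.

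First, fix once and for all a reference generator $\autom$ of $\Gal(\Fqm/\Fq)$. Since this group is cyclic of order $m$, every generator has the form $\autom^a$ with $\gcd(a,m)=1$, so every Gabidulin code is $\CGab{k}{\autom^a}{\alphaVec}$ for some unit $a$ modulo $m$ and some $\alphaVec$ with $\Fq$-independent entries. I would record two equivalences. (i) For a fixed generator, any two choices $\alphaVec,\betaVec$ are $\Fq$-bases of $\Fqm$ (here $n=m$ is essential), so there is $\A\in\GL_m(q)$ with $\betaVec=\alphaVec\A$; as $\A$ has entries in $\Fq$ it commutes with powers of $\autom^a$, whence $\CGab{k}{\autom^a}{\betaVec}=\CGab{k}{\autom^a}{\alphaVec}\A$, so the codes are equivalent. (ii) The identity $\CGab{k}{\sigma}{\alphaVec}=\CGab{k}{\sigma^{-1}}{\sigma^{k-1}(\alphaVec)}$ (already used for Proposition~\ref{prop:GabidulinSeq}(2)) shows that the generators $a$ and $-a$ give the \emph{same} code. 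Together (i) and (ii) imply that the equivalence class depends only on the unordered pair $\{a,-a\}$ of units mod $m$, giving at most $\phi(m)/2$ classes; here I use $m\ge 3$ (forced by $1<k<n-1$) so that $a\not\equiv -a$ and the units split into exactly $\phi(m)/2$ such pairs.

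For the matching lower bound I would show that the pair $\{a,-a\}$ is recovered from the equivalence class by an invariant. Consider $f_\C(\ell):=s^{\autom^\ell}_1(\C)=\dim\!\big(\C+\autom^\ell(\C)\big)$ for $\ell\in\{0,\dots,m-1\}$; by Lemma~\ref{lem:invariant} (with $r=1$, $s_1=\ell$, and the fixed automorphism $\autom$) each value $f_\C(\ell)$, hence the whole function $f_\C$, is an equivalence invariant. For $\C=\CGab{k}{\autom^a}{\alphaVec}$ I write $\autom^\ell=(\autom^a)^{\ell'}$ with $\ell'\equiv a^{-1}\ell\pmod m$ (possible since $\gcd(a,m)=1$), so that $f_\C(\ell)=s^{(\autom^a)^{\ell'}}_1(\C)$ is exactly the quantity computed by Proposition~\ref{prop:GabidulinSeq}(3) applied to $\C$ with its native generator $\autom^a$. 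The key point is to read off from that piecewise formula the set $T_\C:=\{\ell : f_\C(\ell)=k+1\}$. Using $1<k<n-1$: in the first and third branches $f_\C(\ell)$ equals $\min\{k+\ell',n\}$ resp. $\min\{k+m-\ell',n\}$, which equals $k+1$ only at $\ell'=1$ resp. $\ell'=m-1$ (because $k+1<n$), while in the middle branch it equals $\min\{2k,n\}>k+1$ (because $k>1$ and $k<n-1$). Hence $f_\C(\ell)=k+1$ forces $\ell'\equiv\pm1$, i.e. $\ell\equiv\pm a\pmod m$, giving $T_\C=\{a\bmod m,\,-a\bmod m\}$.

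Combining the two directions finishes the proof: if $\CGab{k}{\autom^{a_1}}{\cdot}$ and $\CGab{k}{\autom^{a_2}}{\cdot}$ are equivalent, then $T_{\C_1}=T_{\C_2}$, forcing $\{a_1,-a_1\}=\{a_2,-a_2\}$, so the $\phi(m)/2$ pairs index pairwise inequivalent classes, and by the second paragraph each class is realized. I expect the main obstacle to be this last computation, namely verifying that $T_\C$ is exactly $\{a,-a\}$: this is precisely where both hypotheses $k>1$ (to separate the value $k+1$ from the plateau value $2k$) and $k<n-1$ (to keep $k+1$ below the saturation value $n$) are needed, and a careless case analysis would either merge the two endpoints with the plateau or pick up spurious solutions at indices where $\min\{\cdot,n\}$ saturates.
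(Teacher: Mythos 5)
Your proof is correct and follows essentially the same route as the paper: both collapse the parameter space via the same two reductions (changing $\alphaVec$ by a matrix in $\GL_m(q)$, and the identity relating the generators $\sigma$ and $\sigma^{-1}$), and both separate the remaining classes by combining the invariance from Lemma~\ref{lem:invariant} with the computation of $s^{\sigma^\ell}_1$ in part 3 of Proposition~\ref{prop:GabidulinSeq}. Your packaging of the last step as recovering the set $T_\C=\{a\bmod m,\,-a\bmod m\}$ from the invariant function is only a mild rephrasing of the paper's direct comparison $s^\tau_1(\C')=k+1$ versus $s^\tau_1(\C)\geq k+2$.
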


\begin{proof}
Let $\autom, \tau$ be two generators of $\Gal(\Fm/\Fq)$, and $\alphaVec,\betaVec\in \Fm^m$ be two vectors with entries that are linearly independent over $\Fq$. Consider two Gabidulin codes $\C=\CGab{k}{\sigma}{\alphaVec}$ and $\C'=\CGab{k}{\tau}{\betaVec}$. First we show that, if $\tau=\autom$ then $\C$ and $\C'$ are equivalent. Since $n=m$, then $\supp_q(\alphaVec)=\supp_q(\betaVec) = \Fqm$. Therefore, there exists $\A \in \GL_m(q)$ such that $\alphaVec \A = \betaVec$. This implies that $\C\cdot \A=\C'$ and they are equivalent. 
Now, suppose that $\tau=\autom^{-1}$. Then, by the first part we can assume $\alphaVec = \betaVec$. Since $\CGab{k}{\sigma^{-1}}{\betaVec} = \CGab{k}{\sigma}{\autom^{m-k+1}(\betaVec)}$, we obtain again that they are equivalent. 
Finally, if  $\tau\notin \{\autom, \autom^{-1}\}$, then $\tau=\autom^\ell$, with $\ell \notin \{1,-1\}$, and by part 3 of Proposition \ref{prop:GabidulinSeq}, we have $s^\tau_1(\C')=k+1$ and $s^\tau_1(\C)\geq k+2$. By Lemma \ref{lem:invariant} we deduce that they cannot be equivalent. Since there are exactly $\phi(m)$ generators for $\Gal(\Fm/\Fq)$, we conclude.
\end{proof}

\subsection{The Sequence for Sheekey's Twisted Gabidulin Codes}

\begin{proposition}\label{prop:TwGabidulinSeq}
Let $\C := \CTGabSheekey{k,\eta}{\autom}{\alphaVec}$ be one of Sheekey's twisted Gabidulin codes as defined in Definition~\ref{def:codes}, where $\eta \in \Fm^*$ with $N(\eta) \neq (-1)^{km}$.
\begin{enumerate}
\item If $1 \leq \ell \leq k-1$, then $\mathcal S^{\autom^{\ell}}_i(\C) = \CGab{k+i\ell+1}{\sigma}{\alphaVec}$ and $s^{\autom^\ell}_i(\C)=\min\{k+i\ell+1,n\}$.
\item If $m-k+1 \leq \ell <m$, then $\mathcal S^{\autom^{\ell}}_i(\C) = \CGab{k+i(m-\ell)+1}{\autom^{-1}}{\autom^{k}(\alphaVec)}$ and $s^{\autom^\ell}_i(\C)=\min\{k+i(m-\ell)+1,n\}$.
\item If $m=n$, and $1 \leq \ell \leq m-1$ , then $s^{\autom^\ell}_1(\C)=\min\{r,m\}$, where
\vspace{-0.2cm}
$$r=\begin{cases} k+\ell+1  & \mbox{ if }  1\leq \ell \leq k-1\\
 k+m-\ell+1& \mbox{ if } m-k+1\leq \ell \leq m-1\\
2k & \mbox{ if } k \leq \ell \leq m-k \end{cases} .$$
\end{enumerate}
\end{proposition}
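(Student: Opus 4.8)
The plan is to compute $\mathcal S^{\autom^\ell}_i(\C)=\sum_{j=0}^i\autom^{j\ell}(\C)$ by working in the basis $\{\autom^r(\alphaVec)\}_r$: I track which ``pure'' powers $\autom^r(\alphaVec)$ land in the span and then use the two extreme twisted generators to recover the two remaining powers. Throughout, $\autom^{j\ell}(\C)$ is generated by the twisted vector $\autom^{j\ell}(\alphaVec)+\autom^{j\ell}(\eta)\,\autom^{k+j\ell}(\alphaVec)$ together with the pure vectors $\autom^{j\ell+1}(\alphaVec),\dots,\autom^{j\ell+k-1}(\alphaVec)$.

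For part~1 and $i\geq 1$ (the case $i=0$ being simply $s^{\autom^\ell}_0(\C)=k$), the hypothesis $1\le\ell\le k-1$ makes the blocks of pure exponents $\{j\ell+1,\dots,j\ell+k-1\}$, $j=0,\dots,i$, overlap consecutively, so their union is exactly $\{1,\dots,k-1+i\ell\}$; hence $\autom^r(\alphaVec)\in\mathcal S^{\autom^\ell}_i(\C)$ for all $1\le r\le k-1+i\ell$. Since $i\ell\ge 1$, the power $\autom^k(\alphaVec)$ is already available, so the $j=0$ twisted generator yields $\alphaVec$; likewise $\autom^{i\ell}(\alphaVec)$ is available, so the $j=i$ twisted generator yields $\autom^{k+i\ell}(\alphaVec)$, while the intermediate twisted generators are combinations of pure vectors already present. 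Both inclusions then give $\mathcal S^{\autom^\ell}_i(\C)=\langle\autom^r(\alphaVec):0\le r\le k+i\ell\rangle=\CGab{k+i\ell+1}{\autom}{\alphaVec}$, and the dimension follows from Proposition~\ref{prop:rankMoore}.

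For part~2 I would reduce to part~1 via the involution $\autom\leftrightarrow\autom^{-1}$. A direct check shows $\C=\CTGabSheekey{k,\eta^{-1}}{\autom^{-1}}{\autom^{k}(\alphaVec)}$: indeed $\autom^{-r}(\autom^k(\alphaVec))=\autom^{k-r}(\alphaVec)$ reproduces the pure generators for $r=1,\dots,k-1$, and the twisted generator $\autom^k(\alphaVec)+\eta^{-1}\alphaVec=\eta^{-1}(\alphaVec+\eta\autom^k(\alphaVec))$ agrees up to a scalar. Setting $\tau:=\autom^{-1}$ and $\ell':=m-\ell$, the range $m-k+1\le\ell<m$ gives $1\le\ell'\le k-1$ and $\autom^\ell=\tau^{\ell'}$, so $\mathcal S^{\autom^\ell}_i(\C)=\mathcal S^{\tau^{\ell'}}_i(\C)$; applying part~1 to the $\tau$-twisted code (with nonzero twist $\eta^{-1}$ and evaluation vector $\autom^k(\alphaVec)$, whose entries are still $\Fq$-independent) yields exactly $\CGab{k+i(m-\ell)+1}{\autom^{-1}}{\autom^{k}(\alphaVec)}$ and the stated dimension.

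For part~3 with $n=m$, the two outer ranges are parts~1 and~2 at $i=1$. For the middle range $k\le\ell\le m-k$ I argue directly: since $\{\autom^r(\alphaVec)\}_{r=0}^{m-1}$ is a basis of $\Fm^m$ (Proposition~\ref{prop:rankMoore} with $n=m$), it suffices to show the $2k$ generators of $\C+\autom^\ell(\C)$ are independent. In these coordinates the $2(k-1)$ pure generators are distinct standard basis vectors, while the twisted generators have endpoints $0,k,\ell,\ell+k\pmod m$; for $k<\ell<m-k$ these four endpoints are pairwise distinct and avoid the pure exponents, so each twisted generator owns a private coordinate and independence is immediate, giving $\dim=2k=\min\{2k,m\}$. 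The main obstacle is the boundary, where endpoints collide: $\ell=k$, $\ell=m-k$, and especially $m=2k,\ell=k$, in which case both twisted generators lie in $\langle\alphaVec,\autom^k(\alphaVec)\rangle$ and are independent iff $\eta\,\autom^k(\eta)\ne 1$. Here I invoke the MRD hypothesis: transitivity of the norm gives $N(\eta)=\mathrm{Nm}_{\F_{q^k}/\Fq}\!\big(\eta\,\autom^k(\eta)\big)$, so $\eta\,\autom^k(\eta)=1$ would force $N(\eta)=1=(-1)^{km}$ (as $km$ is even for $m=2k$), contradicting $N(\eta)\ne(-1)^{km}$; hence independence holds and $\dim=2k=m$.
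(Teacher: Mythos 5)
Your proof is correct, and it in fact proves more than the paper does. For parts 1 and 2 you follow essentially the paper's route: the paper proves part 1 only for $i=1$ by the same exponent-tracking argument and then obtains general $i$ by combining part 2 of Proposition~\ref{prop:propertiesSi} with part 1 of Proposition~\ref{prop:GabidulinSeq}, whereas you run the computation directly for all $i$ in one pass; for part 2 the paper uses exactly your key identity $\CTGabSheekey{k,\eta}{\autom}{\alphaVec}=\CTGabSheekey{k,\eta^{-1}}{\autom^{-1}}{\autom^{k}(\alphaVec)}$, stated without verification, while you check it. The substantive difference is part 3: the paper explicitly omits the middle range $k\le\ell\le m-k$ ``due to lack of space'', and you supply a complete argument for it. Your coordinate argument in the basis $\{\autom^{r}(\alphaVec)\}_{r=0}^{m-1}$ (valid since $n=m$, by Proposition~\ref{prop:rankMoore}) correctly gives dimension $2k$ in the interior $k<\ell<m-k$, and your analysis of the extremal case $m=2k$, $\ell=k$ is the most valuable addition: it pinpoints the only place where the MRD hypothesis $N(\eta)\neq(-1)^{km}$ is actually needed, namely to exclude $\eta\,\autom^{k}(\eta)=1$ via norm transitivity (with $km$ even, so $(-1)^{km}=1$). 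The only blemish is expository: for the boundary cases $\ell=k$ or $\ell=m-k$ with $m>2k$ you flag the endpoint collision but do not write out the one-line resolution --- the two twisted generators then share a single coordinate, yet each retains a private one ($0$ and $2k$ when $\ell=k$; $k$ and $m-k$ when $\ell=m-k$), so independence and $\dim = 2k$ follow exactly as in the interior. This is a gap in the write-up, not in the argument.
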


\begin{proof}
\begin{enumerate}
\item Let $1\leq \ell\leq k-1$ and $i=1$. Then $\mathcal S^{\autom^\ell}_1(\C)=\langle \alphaVec+\eta \autom^k(\alphaVec),\ldots,\autom^{k-1}(\alphaVec),\autom^\ell(\alphaVec)+\autom^\ell(\eta)\autom^{k+\ell}(\alphaVec), \autom^{\ell+1}(\alphaVec)\ldots, \autom^{k+\ell-1}(\alphaVec)\rangle\subseteq \CGab{k+\ell}{\autom}{\alphaVec}$. Moreover, $\mathcal S^{\autom^\ell}_1(\C) \supseteq \{\autom(\alphaVec),\ldots, \autom^{\ell+k-1}(\alphaVec)\}$.
Furthermore, it contains $\alphaVec+\eta\autom^k(\alphaVec)$ and $\autom^\ell(\alphaVec)+\autom^\ell(\eta)\autom^{k+\ell}(\alphaVec)$.
Since $1\leq \ell \leq k-1$, then it contains  $\autom^k(\alphaVec)$, and $\autom^{\ell}(\alphaVec)$, and therefore  $\alphaVec, \autom^{k+\ell}(\alphaVec) \in \mathcal S^{\autom^\ell}_1(\C)$, and we deduce $\mathcal S^{\autom^\ell}_1(\C)= \CGab{k+\ell+1}{\autom}{\alphaVec}$.
If $i >1$, by part 2 of Proposition~\ref{prop:propertiesSi}, we have $\mathcal S^{\autom^\ell}_i(\C)=\mathcal S^{\autom^\ell}_{i-1}(\mathcal S^{\autom^\ell}_1(\C))$, and we conclude using part 1 of Proposition \ref{prop:GabidulinSeq}. 
\item Observe that $\CTGabSheekey{k,\eta}{\autom}{\alphaVec} = \CTGabSheekey{k,\eta^{-1}}{\autom^{-1}}{\autom^k(\alphaVec)}$, and that $\autom^{\ell}=(\autom^{-1})^{m-\ell}$, with $1\leq m-\ell \leq k-1$. Then, the claim follows from part 1.
\item If $0\leq \ell\leq k$ or $m-k \leq \ell \leq m-1$, it follows from part~1. The case $k \leq \ell \leq m-k$ is omitted, due to lack of space. \qedhere
\end{enumerate}
\end{proof}

For a prime power $q$, and two integers $k,m$ we consider the left group action of $\Aut(\Fm)$ on the set $\{\alpha \in \Fm \mid N(\alpha)\neq (-1)^{km} \}$, given by $\theta \cdot \alpha:=\theta(\alpha)$. We denote by $\mathcal X_q(m,k)$ the set of orbits of this group action. Observe that the action above is well-defined. Indeed, if $N(\alpha) \neq (-1)^{km}$, and $\theta \in \Aut(\Fm)$, then
$N(\theta(\alpha))=\theta(N(\alpha))$. This is due to the fact that $\Aut(\Fm)$ is a cyclic group, and it contains $\Gal(\Fm/\Fq)$.
Since $(-1)^{km}$ belongs to the prime field, and therefore is fixed by any automorphism $\theta \in \Aut(\Fm)$, we have that also $\theta(N(\alpha))\neq (-1)^{km}$.

\begin{theorem}
Let $2<k<n-2$, and $m=n$. Then there are exactly $\frac{\phi(m)}{2}|\mathcal X_q(m,k)|$ inequivalent Sheekey's twisted Gabidulin codes of length $m$, dimension $k$, and $N(\eta) \neq (-1)^{km}$ (i.e., MRD), where $\phi$ denotes Euler's totient function.
\end{theorem}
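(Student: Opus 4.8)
The plan is to mimic the argument of Theorem~\ref{thm:numberGabidulin}: encode each code by the pair $(\autom,\eta)$ consisting of its defining generator and twist, determine which pairs give equivalent codes, and count the resulting orbits. The first reduction is to eliminate $\alphaVec$. Since $m=n$, every admissible $\alphaVec$ is an $\Fq$-basis of $\Fm$, so any two such vectors satisfy $\gammaVec=\alphaVec\A$ for a unique $\A\in\GL_n(q)$; because the entries of $\A$ lie in $\Fq$ one checks directly that $\CTGabSheekey{k,\eta}{\autom}{\alphaVec}\A=\CTGabSheekey{k,\eta}{\autom}{\gammaVec}$, exactly as in the Gabidulin case. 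Hence equivalence classes of codes are in bijection with classes of pairs $(\autom,\eta)$ under the action induced by the isometries of Lemma~\ref{isometries}, and the whole problem becomes one of describing that action and counting its orbits. Throughout I fix $\alphaVec$ once and for all and write $[\eta]$ for the $\Aut(\Fm)$-orbit of $\eta$, so that $\mathcal X_q(m,k)$ is the set of such orbits.

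Next I would construct the equivalences that produce the claimed identifications. For $\theta\in\Aut(\Fm)$, abelianness of $\Aut(\Fm)$ gives $\theta\autom=\autom\theta$, hence $\theta\bigl(\CTGabSheekey{k,\eta}{\autom}{\alphaVec}\bigr)=\CTGabSheekey{k,\theta(\eta)}{\autom}{\theta(\alphaVec)}$; since $\theta$ preserves $\Fq$ it preserves $\Fq$-independence, and since $(-1)^{km}$ lies in the prime field $N(\theta(\eta))=\theta(N(\eta))\neq(-1)^{km}$, so after reabsorbing $\theta(\alphaVec)$ by the first reduction this yields $(\autom,\eta)\sim(\autom,\theta(\eta))$. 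The identity $\CTGabSheekey{k,\eta}{\autom}{\alphaVec}=\CTGabSheekey{k,\eta^{-1}}{\autom^{-1}}{\autom^{k}(\alphaVec)}$ used in Proposition~\ref{prop:TwGabidulinSeq}(2) gives, again after the reduction, $(\autom,\eta)\sim(\autom^{-1},\eta^{-1})$. The Galois identifications show that, for fixed $\autom$, the number of classes is \emph{at most} $|\mathcal X_q(m,k)|$, and the second shows that $\autom$ and $\autom^{-1}$ carry the same classes with $\eta$ inverted.

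For the reverse (rigidity) direction I would argue in two steps. To pin down $\autom$ up to inversion I use the invariant of Proposition~\ref{prop:TwGabidulinSeq}(3) together with Lemma~\ref{lem:invariant}: if $\C'$ is built from a generator $\tau=\autom^{\ell}$ then $s^{\tau}_1(\C')=k+2$, whereas $s^{\tau}_1(\C)=s^{\autom^{\ell}}_1(\C)$ equals $k+\ell+1$, $k+m-\ell+1$, or $2k$ according to the range of $\ell$; the hypothesis $2<k$ rules out the middle value ($2k\neq k+2$), and the bound $k<n-2$ keeps all relevant dimensions below $n$, so for every $\ell\notin\{1,m-1\}$ one gets $s^{\autom^{\ell}}_1(\C)\ge k+3>k+2$. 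Thus equivalence forces $\tau\in\{\autom,\autom^{-1}\}$. To pin down $\eta$ I would recover the ambient Gabidulin code intrinsically: from the computations behind Proposition~\ref{prop:TwGabidulinSeq}, $\mathcal{S}^{\autom}_1(\C)=\CGab{k+2}{\autom}{\alphaVec}=\langle\alphaVec,\dots,\autom^{k+1}(\alphaVec)\rangle$ and $\mathcal{S}^{\autom^{-1}}_1(\C)=\langle\autom^{-1}(\alphaVec),\dots,\autom^{k}(\alphaVec)\rangle$; since $k<n-2$ the $k+3$ vectors $\autom^{-1}(\alphaVec),\dots,\autom^{k+1}(\alphaVec)$ are $\Fm$-independent by Proposition~\ref{prop:rankMoore}, so their intersection is exactly $G:=\CGab{k+1}{\autom}{\alphaVec}$. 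Hence $G$ is an invariant of $\C$, and any $\autom$-preserving equivalence carries one ambient Gabidulin code onto the other.

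The final step reads off $\eta$ inside $G$. In the Moore basis $\autom^{0}(\alphaVec),\dots,\autom^{k}(\alphaVec)$ of the $(k+1)$-dimensional space $G$, the code $\C$ is the hyperplane $\{c_k=\eta\,c_0\}$, so $\eta$ is the ``slope'' of $\C$ in $G$; comparing this with the Moore description coming from the image vector should express $\eta'$ in terms of $\eta$ and the change of basis. I expect the \textbf{main obstacle} to sit precisely here. Two Gabidulin codes $\CGab{k+1}{\autom}{\alphaVec}$ and $\CGab{k+1}{\autom}{\gammaVec}$ with the same generator already coincide when $\gammaVec=\lambda\alphaVec$, and under such a rescaling the slope changes by the norm-one factor $\autom^{k}(\lambda)\lambda^{-1}$ (indeed right-multiplication by the $\Fq$-matrix of multiplication-by-$\lambda$ sends $\eta\mapsto\eta\,\autom^{k}(\lambda)\lambda^{-1}$); so the delicate point is to control this scalar freedom carefully and to verify that, together with the Galois action, it collapses the $\eta$-classes to exactly the orbits enumerated by $\mathcal X_q(m,k)$. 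Granting this, the count is immediate: the involution $(\autom,[\eta])\mapsto(\autom^{-1},[\eta^{-1}])$ is well defined on classes (inversion preserves $N(\eta)\neq(-1)^{km}$ and commutes with the $\Aut(\Fm)$-action) and is fixed-point-free, since $2<k<n-2$ forces $m\ge 6$ and hence $\autom\neq\autom^{-1}$; it therefore pairs up the $\phi(m)\,|\mathcal X_q(m,k)|$ pairs $(\autom,[\eta])$ into $\tfrac{\phi(m)}{2}\,|\mathcal X_q(m,k)|$ equivalence classes.
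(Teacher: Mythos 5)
Your proposal retraces the paper's own strategy, and everything you actually verify is correct: the elimination of $\alphaVec$ (possible since $m=n$), the Galois identification $(\autom,\eta)\sim(\autom,\theta(\eta))$, the inversion identification $(\autom,\eta)\sim(\autom^{-1},\eta^{-1})$, and the use of Proposition~\ref{prop:TwGabidulinSeq}(3) with Lemma~\ref{lem:invariant} to force $\tau\in\{\autom,\autom^{-1}\}$ are exactly the ingredients of the paper's proof; your recovery of the ambient code $\CGab{k+1}{\autom}{\alphaVec}$ as $\mathcal S^{\autom}_1(\C)\cap\mathcal S^{\autom^{-1}}_1(\C)$ even goes beyond it. The genuine gap is the step you explicitly grant: that the rescaling freedom $\eta\mapsto\eta\,\autom^{k}(\lambda)\lambda^{-1}$, combined with the Galois action, collapses the admissible $\eta$'s to exactly the orbits in $\mathcal X_q(m,k)$. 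Without that, no count follows. You should know that the paper does not close this gap either: it writes $\theta(\C)\A=\CTGabSheekey{k,\theta(\eta)}{\autom}{\theta(\alphaVec)}$ (suppressing the action of $\A$ on the evaluation vector, i.e., implicitly assuming the twist parameter does not depend on which vector represents the code) and then asserts that each orbit in $\mathcal X_q(m,k)$ yields ``exactly one'' equivalence class; the rigidity direction you isolate is never proved there.

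Moreover, the obstacle you located is fatal, not technical. As you computed, if $\A_\lambda\in\GL_n(q)$ is the matrix of multiplication by $\lambda\in\Fm^*$ in the basis $\alphaVec$, then $\CTGabSheekey{k,\eta}{\autom}{\alphaVec}\A_\lambda=\CTGabSheekey{k,\eta\autom^k(\lambda)/\lambda}{\autom}{\alphaVec}$, so $\eta$ and $\eta\,\autom^{k}(\lambda)\lambda^{-1}$ always give equivalent codes, while these two elements need not be $\Aut(\Fm)$-conjugate. Concretely, take $q=3$, $m=n=7$, $k=3$ (admissible, since $2<3<5$): the MRD condition $N(\eta)\neq(-1)^{21}=-1$ says precisely that $\eta$ lies in the norm-one subgroup, of order $1093$, and since $\gcd(3s,7)=1$ the image of $\lambda\mapsto\autom^{3}(\lambda)/\lambda$ is that entire subgroup; hence for each fixed $\autom$ \emph{all} admissible $\eta$ form a single equivalence class, and the invariant argument then gives exactly $\phi(7)/2=3$ classes in total, whereas the theorem claims $\frac{\phi(7)}{2}\,|\mathcal X_3(7,3)|=3\cdot 157=471$. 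So the hoped-for collapse onto $\mathcal X_q(m,k)$ is impossible: the true invariant is the orbit of $\eta$ under the group generated by $\Aut(\Fm)$ \emph{and} multiplication by $\{\autom^{k}(\lambda)/\lambda:\lambda\in\Fm^*\}$. Your plan cannot be completed as stated --- but neither can the paper's proof; your ``main obstacle'' is precisely where the published argument, and indeed the claimed count itself, breaks down.
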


\begin{proof}
 The proof is similar to the one of Theorem \ref{thm:numberGabidulin}.  Let $\autom, \tau$ be two generators of $\Gal(\Fm/\Fq)$,  $\alphaVec, \betaVec \in \Fm^m$ be two vectors with entries that are linearly independent over $\Fq$ and let $\eta, \eta' \in \Fm$ with norm not equal to $(-1)^{km}$. Consider two twisted Gabidulin codes $\C=\CTGabSheekey{k,\eta}{\autom}{\alphaVec}$ and $\C'=\CTGabSheekey{k,\eta'}{\tau}{\betaVec}$.
Suppose  $\tau\notin\{ \autom, \autom^{-1}\}$, 
then $\tau=\autom^\ell$, with $\ell \notin\{ 1,-1\}$, and by part 3 of Proposition \ref{prop:TwGabidulinSeq}, we have $s^\tau_1(\C')=k+2$ and $s^\tau_1(\C)\geq k+3$. By Lemma \ref{lem:invariant} we conclude that they cannot be equivalent.
Now, recall that $\C, \C'$ are equivalent if and only if there exist $\theta \in \Aut(\Fm)$ and $\A \in \GL_n(q)$ such that $\C'=\theta(\C)\A$. When $\C=\CTGabSheekey{k,\eta}{\autom}{\alphaVec}$ we get $\theta(\C)\A= \CTGabSheekey{k,\theta(\eta)}{\autom}{\theta(\alphaVec)}$.
Assume $\eta'=\theta(\eta)$ for some $\theta \in \Aut(\Fm)$. If $\autom=\tau$ then, since $\alpha_1,\ldots, \alpha_m$ are linearly independent, then so are $\theta(\alpha_1), \ldots \theta(\alpha_m)$ and therefore, $\supp_q(\theta(\alphaVec))=\supp_q(\betaVec)$. This implies that there exists $\A \in \GL_m(q)$ such that $\theta(\alphaVec)\A=\betaVec$ and  $\theta(\C) \A = \C'$.
Hence, for every $\autom$ generator of $\Gal(\Fm/\Fq)$, and for every representative $\eta$ in an orbit in $\mathcal X_q(m,k)$, we have exactly one equivalent class of twisted Gabidulin codes. Moreover, observe that $\CTGabSheekey{k,\eta}{\autom^{-1}}{\betaVec} = \CTGabSheekey{k,\eta^{-1}}{\autom}{\autom^{m-k}(\betaVec)}$. 
This shows that $\C$ and $\C'$ are equivalent if and only if $\tau=\autom$ and $\eta=\theta(\eta')$ for some $\theta \in \Aut(\Fm)$ or $\tau=\autom^{-1}$ and $\eta^{-1}=\theta(\eta')$ for some $\theta \in \Aut(\Fm)$. By counting, we get exactly $\frac{\phi(m)}{2}|\mathcal X_q(m,k)|$ inequivalent twisted Gabidulin codes.
\end{proof}

Analogously to the previous proof one can easily show the known fact that a generalized Gabidulin code is not equivalent to any twisted Gabidulin code with $\eta \neq 0$.

\subsection{The Sequence for Other Twisted Gabidulin Codes}

\begin{theorem}\label{thm:sequence_twisted_codes}
Let $\C := \CTGab{k,\tVec,\hVec,\etaVec}{\autom}{\alphaVec}$ be a (generalized) twisted Gabidulin code with $\numTwists=1$ twist, $\tVec = t \notin \{1,n-k\}$, $\hVec = h \notin \{0,k-1\}$.
Also, let $\alphaVec$ be chosen such that the $\alpha_i$ span a subfield of $\Fqm$, i.e., $\langle \alpha_1,\dots,\alpha_n \rangle = \mathbb{F}_{q^n}$. Then,
\begin{equation*}
s^\autom_{i}(\C) = \begin{cases}
k+1+2i,  \hfill \text{ if }1\leq  i\leq \min\{t-1,n-k-t\}, \\
\min\{k+1+\min\{t-1,n-k-t\} + i, n\}, \, \text{else}.
\end{cases}
\end{equation*}
\end{theorem}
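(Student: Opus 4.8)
The plan is to reduce the whole computation to counting the size of a union of two blocks of consecutive residues modulo $n$. Throughout I would write $e_p := \autom^p(\alphaVec)$ and $c_j := \autom^j(\eta)$. The subfield hypothesis $\langle \alpha_1,\dots,\alpha_n\rangle_{\Fq}=\Fqn$ has two consequences I would record first: by Proposition~\ref{prop:rankMoore} the vectors $e_0,\dots,e_{n-1}$ are $\Fqm$-linearly independent, and since $\autom^n$ fixes $\Fqn$ elementwise we have $e_{p+n}=e_p$, so the family $\{e_p\}$ is $n$-periodic and may be indexed by $\mathbb{Z}/n$. From the generators in Definition~\ref{def:codes} (with $\numTwists=1$), $\autom^j(\C)$ is spanned by the ``normal'' vectors $e_{j+i'}$ with $i'\in\{0,\dots,k-1\}\setminus\{h\}$ together with the ``twisted'' vector $e_{j+h}+c_j\,e_{j+k-1+t}$. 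Hence $\Saut_i(\C)$ is spanned by all of these for $j=0,\dots,i$.

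Next I would separate the normal and twisted contributions. Let $W_i$ be the span of the normal generators. I claim $W_i=\langle e_p : p\in A_i\rangle$ with $A_i:=\{0,1,\dots,i+k-1\}\bmod n$, so that $\dim W_i=|A_i|=\min\{i+k,n\}$. At the integer level this amounts to showing that the restricted sumset $\{j+i' : 0\le j\le i,\ i'\in\{0,\dots,k-1\}\setminus\{h\}\}$ still equals the full interval $[0,i+k-1]$: the number of representations of a value $p$ in the unrestricted sumset equals $1$ only at the two endpoints $p=0$ and $p=i+k-1$, and there the unique representation uses $i'=0$ resp.\ $i'=k-1$, neither of which is $h$ because $h\notin\{0,k-1\}$; every interior $p$ has at least two representations, so deleting those with $i'=h$ removes no value, and reducing mod $n$ gives $A_i$. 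I would then pass to the quotient $\Fqm^n/W_i$, whose nonzero basis images are the $e_p\bmod W_i$ for $p\notin A_i$. Since $j+h\in[0,i+k-1]$ always (as $h\le k-2$), the first summand of each twisted generator lies in $W_i$, so modulo $W_i$ the twisted vectors reduce to $c_j\,e_{j+k-1+t}$ with $c_j\neq 0$. Their span modulo $W_i$ therefore has dimension equal to the number of distinct residues in $B_i:=\{k-1+t,\dots,i+k-1+t\}\bmod n$ not already lying in $A_i$. Consequently
\[
s^\autom_i(\C)=\dim W_i + |B_i\setminus A_i| = |A_i\cup B_i|,
\]
a purely combinatorial quantity.

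It remains to evaluate $|A_i\cup B_i|$, where $A_i$ is the block of $i+k$ consecutive residues starting at $0$ and $B_i$ is the block of $i+1$ consecutive residues starting at $k-1+t$; this case analysis, driven by whether the two blocks overlap and whether they wrap around mod $n$, is where I expect the real work to be. Set $\mu:=\min\{t-1,n-k-t\}$, which is positive precisely because $t\notin\{1,n-k\}$. For $1\le i\le\mu$ both blocks stay inside $[0,n-1]$ (using $i\le n-k-t$) and are disjoint (the bound $i\le t-1$ forces $k-1+t>i+k-1$), giving $|A_i\cup B_i|=(i+k)+(i+1)=k+1+2i$. For $i>\mu$ exactly one of the two obstructions sets in: if $\mu=t-1$ the blocks begin to overlap, and while neither wraps ($i\le n-k-t$) they merge into the single interval $[0,i+k+t-1]$ of size $i+k+t=k+1+\mu+i$; if instead $\mu=n-k-t$ then $B_i$ wraps, its wrapped part $[0,i+k+t-1-n]$ is absorbed into $A_i$, and one computes $|A_i\cup B_i|=n+1-t+i=k+1+\mu+i$ as long as the residual gap is nonempty. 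In either subcase the value increases by one per step until the blocks jointly cover all of $\mathbb{Z}/n$ and it saturates at $n$, which is exactly $\min\{k+1+\mu+i,\,n\}$. Checking that the two subcases glue at $i=\mu$ (both give $k+1+2\mu$) and that the saturation point matches the $\min$ with $n$ finishes the argument. The main obstacle is organizing this modular interval arithmetic so that the overlap onset and the wrap-around onset are handled uniformly; everything preceding it is justified by Proposition~\ref{prop:rankMoore} and the representation-counting for the normal span.
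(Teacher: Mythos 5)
Your argument is correct, and it rests on the same structural picture as the paper's proof: $\Saut_i(\C)$ is spanned by two blocks of consecutive rows of the Moore matrix $M_n^{\autom}(\alphaVec)$, indexed modulo $n$ because $\autom^n(\alphaVec)=\alphaVec$, and the dimension grows by two per step until the gaps close at $i=t-1$ and $i=n-k-t$. Where you differ is in how that structural claim is established. The paper computes $\Saut_1(\C)$ explicitly (using $h\notin\{0,k-1\}$ to recover $\autom^{k-1+t}(\alphaVec)$ and $\autom^{k+t}(\alphaVec)$ from the twisted generators) and then inducts via $\Saut_{i+1}(\C)=\Saut_1(\Saut_i(\C))$ (part~2 of Proposition~\ref{prop:propertiesSi}), leaving the induction step as ``similar arguments as above.'' You instead argue uniformly in $i$: you split the generators into normal and twisted ones, show by counting representations in the sumset that the normal generators alone span $\langle e_p : p\in A_i\rangle$ --- this is exactly where $h\notin\{0,k-1\}$ enters, since the endpoints $0$ and $i+k-1$ have unique representations using $i'=0$ and $i'=k-1$ --- and then quotient by $W_i$ so that the twisted generators visibly contribute exactly $|B_i\setminus A_i|$ new dimensions. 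This yields the clean identity $s^\autom_i(\C)=|A_i\cup B_i|$ and makes rigorous, without induction, what the paper's proof leaves sketched; the subsequent interval arithmetic is the same gap-closing count in both proofs. Two minor points: your claim that ``exactly one of the two obstructions sets in'' for $i>\mu$ is not literally true when $t-1=n-k-t$ (both set in simultaneously), but your saturation argument already covers that case, since the union is then all of $\mathbb{Z}/n$ and $\min\{k+1+\mu+i,n\}=n$; and the representation-counting step genuinely needs $i\ge 1$ (for $i=0$ interior points have a unique representation, and indeed $s^\autom_0(\C)=k<|A_0\cup B_0|$), which is consistent with the range in which the theorem's formula is meant to apply.
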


\begin{proof}
By definition, we have
\begin{align*}
\C = \big\langle \alphaVec, \sigma(\alphaVec), \dots,\sigma^{h-1}(\alphaVec), \sigma^{h}(\alphaVec)+\eta \sigma^{k-1+t}(\alphaVec), \\
\sigma^{h+1}(\alphaVec),\dots,\sigma^{k-1}(\alphaVec) \big\rangle, \text{ and} \\
\sigma(\C) = \big\langle \sigma(\alphaVec), \dots,\sigma^{h}(\alphaVec), \sigma^{h+1}(\alphaVec)+\sigma(\eta) \sigma^{k+t}(\alphaVec), \\
\quad \sigma^{h+2}(\alphaVec),\dots,\sigma^{k}(\alphaVec) \big\rangle.
\end{align*}
Since $h \notin \{0,k-1\}$, all vectors $\alphaVec,\sigma(\alphaVec), \dots,\sigma^{k}(\alphaVec)$ are contained in $\mathcal{S}^\autom_{1}(\C) = \C + \sigma(\C)$.
Furthermore, $\sigma^{k-1+t}(\alphaVec)$ and $\sigma^{k+t}(\alphaVec)$ can be obtained by linear combinations of the basis elements above.
Conversely, all basis elements above can be written as linear combinations of these $k+3$ vectors, so we have
$
\mathcal{S}^\autom_{1}(\C) = \big\langle \alphaVec, \dots,\sigma^{k}(\alphaVec),\sigma^{k-1+t}(\alphaVec),\sigma^{k+t}(\alphaVec) \big\rangle.
$
Furthermore, due to $t \notin \{1,n-k\}$, the basis elements are distinct and linearly independent since they are rows of the $\sigma$-Moore matrix $M_{n}^{\sigma}(\alphaVec)$ (due to $k+t < n$).
Hence, $s_1^{\autom}(\C) = \dim \mathcal{S}^\autom_{1}(\C) = k+3$.
The rest of the proof follows inductively.
We claim that $\mathcal{S}^\autom_{i}(\C)$ is given by
\begin{align*}
\mathcal{S}^\autom_{i}(\C) \!=\! \big\langle \alphaVec,\dots,\sigma^{k-1+i}(\alphaVec),\sigma^{k-1+t}(\alphaVec),\dots,\sigma^{k-1+t+i}(\alphaVec) \big\rangle.
\end{align*}
For $i=1$, this is the statement that we proved above.
The induction step follows from part~2 of Proposition~\ref{prop:propertiesSi} and similar arguments as above.

It is left to determine the rank of the given generators of $\mathcal{S}^\autom_{i}(\C)$.
Note that the generators consist of two sets of consecutive rows of the $\sigma$-Moore matrix $M_n^{\sigma}(\alphaVec)$: The rows $\alphaVec,\dots,\sigma^{k-1+i}(\alphaVec)$ and the rows $\sigma^{k-1+t}(\alphaVec),\dots,\sigma^{k-1+t+i}(\alphaVec)$. In between, there are two ``gaps'', i.e., the consecutive rows $\sigma^{k+i}(\alphaVec),\dots,\sigma^{k-2+t}(\alphaVec)$ and $\sigma^{k+t+i}(\alphaVec),\dots,\sigma^{n-1}(\alphaVec)$ are not contained in $\mathcal{S}^\autom_{i}(\C)$.

The dimension of $S^\autom_{i}(\C)$ increases by \emph{two} for $i \to i+1$ as long as these gaps are not closed, since in this case, the vectors $\sigma^{k+i}(\alphaVec)$ and $\sigma^{k+t+i}(\alphaVec)$ are added to the basis of $S^\autom_{i}(\C)$ and are linearly independent of the existing basis elements.

As soon as one of the gaps is closed, the respective ``new vector'' $\sigma^{k+i}(\alphaVec)$ or $\sigma^{k+t+i}(\alphaVec)$ is already in the basis (since the $\alpha_i$ span a subfield of $\Fqm$, we have $\sigma^n(\alphaVec) = \alphaVec$). The first gap is closed for $i \geq t-1$ and the second gap is closed for $i\geq n-k-t$.

Thus, if exactly one gap is closed, which is the case for
\begin{align*}
\min\{t-1,n-k-t\} \leq i < \max\{t-1,n-k-t\},
\end{align*}
then we have $s_{i+1}^{\autom}(\C) = s_{i}^{\autom}(\C)+1$. As soon as both gaps are closed, i.e., for $i \geq \max\{t-1,n-k-t\}$, we have $s_{i+1}(\C) = s_{i}(\C) = n$. This proves the claim.
\end{proof}

\begin{remark}\label{rem:s_sequence_general_alpha_twisted_gab}
For arbitrary $\alphaVec$, the sequence given in Theorem~\ref{thm:sequence_twisted_codes} is not necessarily equal to the given values, but always lower bounded by them. This is due to the fact that for the $\alpha_i$ to span a subfield, we have $\sigma^n(\alphaVec) = \alphaVec$ and in the general case, $\sigma^n(\alphaVec) = \sum_{i=0}^{n-1}\lambda_i \sigma^i(\alphaVec)$ for a non-vanishing linear combination $\lambda_i$ that depends on $\alphaVec$. Hence, it is difficult to analyze whether a vector $\sigma^{i}(\alphaVec)$, for $i \geq n$, contributes a linearly independent vector to the generating set of the code.
\end{remark}

\begin{table*}
\caption{Lower / Upper Bounds on the Number of Inequivalent Twisted Gabidulin Codes (fixed $m,n,k,\alphaVec,\etaVec$, variable $\sigma,\tVec,\hVec$), cf.~Section~\ref{sec:numerical_results}.} 
\label{tab:counting_twisted_codes}
\vspace{-0.4cm}
\begin{center}
\setlength{\tabcolsep}{2pt}
\begin{tabular}{c||c|c|c|c|c|c|c|c|c|c|c|c|c}
$n$ $\backslash$ $k$ 	& 3 & 4 & 5 & 6 & 7 & 8 & 9 & 10 & 11 & 12 & 13 & 14 & 15  \\
\hline \hline
7 &    9 /   36 &    6 /   36 &             &             &             &             &             &             &             &             &             &             &             \\
8 &    7 /   60 &    7 /   64 &    3 /   60 &             &             &             &             &             &             &             &             &             &             \\
9 &   15 /   54 &   21 /   60 &   15 /   60 &    6 /   54 &             &             &             &             &             &             &             &             &             \\
10 &   12 /   84 &   14 /   96 &   13 /  100 &    7 /   96 &    3 /   84 &             &             &             &             &             &             &             &             \\
11 &   40 /  120 &   60 /  140 &   65 /  150 &   60 /  150 &   30 /  140 &   10 /  120 &             &             &             &             &             &             &             \\
12 &   19 /  108 &   18 /  128 &   12 /  140 &   12 /  144 &    8 /  140 &    6 /  128 &    3 /  108 &             &             &             &             &             &             \\
13 &   66 /  180 &  102 /  216 &  108 /  240 &  120 /  252 &   96 /  252 &   72 /  240 &   36 /  216 &   12 /  180 &             &             &             &             &             \\
14 &   30 /  198 &   36 /  240 &   42 /  270 &   39 /  288 &   42 /  294 &   30 /  288 &   21 /  270 &   12 /  240 &    4 /  198 &             &             &             &             \\
15 &   52 /  144 &   64 /  176 &   72 /  200 &   76 /  216 &   72 /  224 &   64 /  224 &   44 /  216 &   36 /  200 &   20 /  176 &    8 /  144 &             &             &             \\
16 &   58 /  312 &   76 /  384 &   92 /  440 &   72 /  480 &   80 /  504 &   72 /  512 &   56 /  504 &   40 /  480 &   28 /  440 &   16 /  384 &    5 /  312 &             &             \\
17 &  136 /  336 &  184 /  416 &  224 /  480 &  240 /  528 &  256 /  560 &  280 /  576 &  240 /  576 &  192 /  560 &  128 /  528 &   96 /  480 &   48 /  416 &   16 /  336 &             \\
18 &   51 /  270 &   54 /  336 &   60 /  390 &   57 /  432 &   60 /  462 &   54 /  480 &   51 /  486 &   39 /  480 &   27 /  462 &   18 /  432 &   12 /  390 &    9 /  336 &    4 /  270 \\

\end{tabular}
\end{center}
\vspace{-0.5cm}
\end{table*}

The following example was given in \cite[Example~7.19]{puchinger2018construction}, where it was stated as an open problem if the given code is equivalent to a Sheekey's twisted Gabidulin code with squared automorphism. With our new criterion we can easily show that this is not the case.

\begin{example}
Consider $3<k<n-5$, $h=k-2$, $t=2$ and the code $\C:= \CTGab{k,\tVec,\hVec,\etaVec}{\autom}{\alphaVec}$, for some $\alpha_i$'s linearly independent over $\Fq$: In this case, $\C$ is generated by 
\begin{equation*}
\left\{\alphaVec,\sigma(\alphaVec),\dots,\sigma^{k-3}(\alphaVec),\sigma^{k-1}(\alphaVec), \eta \sigma^{k+1}(\alphaVec) + \sigma^{k-2}(\alphaVec) \right\},
\end{equation*}
and ${\sigma^2}(\C)$ is generated by
\begin{equation*}
\left\{\sigma^2(\alphaVec),\dots,\sigma^{k-1}(\alphaVec),\sigma^{k+1}(\alphaVec), \sigma^2(\eta) \sigma^{k+3}(\alphaVec) + \sigma^{k}(\alphaVec)\right\}.
\end{equation*}
Similarly, ${\sigma^4}(\C)$ is generated by
\begin{equation*}
\left\{\sigma^4(\alphaVec),\dots,\sigma^{k+1}(\alphaVec),\sigma^{k+3}(\alphaVec), \sigma^4(\eta) \sigma^{k+5}(\alphaVec) + \sigma^{k+2}(\alphaVec)\right\}.
\end{equation*}
Hence, $S_2^{\sigma^2} (\C)$ has generators
\begin{equation*}
\left\{\alphaVec,\dots,\sigma^{k+1}(\alphaVec),\sigma^{k+3}(\alphaVec), \sigma^4(\eta) \sigma^{k+5}(\alphaVec) + \sigma^{k+2}(\alphaVec)\right\}.
\end{equation*}
and we have $s_1^{\sigma^2}(\C) = k+2$ and $s_2^{\sigma^2}(\C) = k+4$.
Note that this holds for any vector $\alphaVec$ with linearly independent entries since, by assumption, $k+5 <n$ (cf.~Remark~\ref{rem:s_sequence_general_alpha_twisted_gab}).
By Proposition~\ref{prop:TwGabidulinSeq}, we know that for Sheekey's twisted Gabidulin code $\C' := \CTGabSheekey{k,\eta'}{(\sigma^2)^j}{\betaVec}$, for some $\betaVec \in \Fqm^n$, we have $s_1^{\sigma^2}(\C') = k+2$ and $s_2^{\sigma^2}(\C') = k+3$. 
Therefore, $\C$ and $\C'$ cannot be equivalent.
\end{example}

\subsection{The Sequence for Gabidulin's New Codes}

We now determine the sequence for Gabidulin's new codes, as defined in 
 Definition \ref{def:codes}. This will show that, if $N(\eta)\neq (-1)^{km}$ (as was assumed in \cite{gabidulin2017new} for the codes to be MRD), they are actually equivalent (and hence equal) to classical Gabidulin codes. 

\begin{theorem}\label{thm:gabidulin_code}
Let $\C := \CGabNew{k,\eta}{\autom,\mathsf{new}}{\alphaVec}$ be one of Gabidulin's new codes as defined in Definition~\ref{def:codes}.
Then, we have
\begin{equation*}
s_i(\C) = \min\{k+i,n\} \quad \forall i = 0,\dots,n.
\end{equation*}
\end{theorem}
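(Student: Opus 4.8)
The plan is to show that the $\sigma$-sequence of Gabidulin's new code coincides with that of a classical Gabidulin code, namely $s_i(\C) = \min\{k+i,n\}$, by proving the stronger statement that $\mathcal{S}^\autom_i(\C) = \CGab{k+i}{\autom}{\alphaVec}$ for all $i$ (up to saturation at length $n$). Since by part~1 of Proposition~\ref{prop:GabidulinSeq} the classical Gabidulin code satisfies exactly this, it suffices to establish the base case $i=1$, namely $\mathcal{S}^\autom_1(\C) = \C + \sigma(\C) = \CGab{k+1}{\autom}{\alphaVec}$, and then invoke part~2 of Proposition~\ref{prop:propertiesSi} to obtain $\mathcal{S}^\autom_{i}(\C) = \mathcal{S}^\autom_{1}(\mathcal{S}^\autom_{i-1}(\C))$ and close the induction via part~1 of Proposition~\ref{prop:GabidulinSeq}. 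Consequently the computation reduces entirely to understanding $\C + \sigma(\C)$.

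First I would write out the generating sets of $\C$ and $\sigma(\C)$ explicitly. Recall that $\C$ is spanned by the twisted generators $\sigma^{i}(\alphaVec)+\sigma^{i}(\eta)\sigma^{k+i}(\alphaVec)$ for $0\le i<m-k$, together with the plain generators $\sigma^i(\alphaVec)$ for $m-k\le i<k$. Applying $\sigma$ shifts all indices up by one, so $\sigma(\C)$ is spanned by $\sigma^{i}(\alphaVec)+\sigma^{i}(\eta)\sigma^{k+i}(\alphaVec)$ for $1\le i<m-k+1$ and $\sigma^i(\alphaVec)$ for $m-k+1\le i<k+1$. The key observation is that $\C+\sigma(\C)$ contains matching twisted generators for overlapping indices $i$ and can thus difference them out: for indices where the same twisted vector $\sigma^{i}(\alphaVec)+\sigma^{i}(\eta)\sigma^{k+i}(\alphaVec)$ appears in both (or where plain generators from one sum coincide with the support of a twisted generator in the other), the pure powers $\sigma^i(\alphaVec)$ can be recovered. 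The plan is to show that these cancellations propagate across the full index range $0\le i\le k$, so that $\mathcal{S}^\autom_1(\C)$ contains every pure power $\alphaVec,\sigma(\alphaVec),\dots,\sigma^{k}(\alphaVec)$, giving $\CGab{k+1}{\autom}{\alphaVec}\subseteq \mathcal{S}^\autom_1(\C)$; the reverse inclusion is immediate since each generator lies in the span of these $k+1$ consecutive Moore rows. Linear independence of the $k+1$ rows follows from Proposition~\ref{prop:rankMoore}, yielding $s_1(\C)=k+1$ as desired.

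The main obstacle I anticipate is handling the boundary between the twisted block $0\le i<m-k$ and the plain block $m-k\le i<k$, and in particular checking that the restriction $m-k\le k$ (imposed in Definition~\ref{def:codes}) guarantees enough overlap for the cancellations to reach both endpoints of the index range. The delicate point is that the twisted generators involve the shifted scalars $\sigma^i(\eta)$, so I must verify that subtracting the $\C$-generator indexed $i$ from the $\sigma(\C)$-generator indexed $i$ (which both carry the \emph{same} twist coefficient $\sigma^i(\eta)$, precisely because $\eta_i=\sigma^{i-1}(\eta)$ in the identification given after Definition~\ref{def:codes}) produces exactly a pure power $\sigma^i(\alphaVec)$ or $\sigma^{k+i}(\alphaVec)$ with no residual twist. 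This compatibility of the twist coefficients under $\sigma$ is the structural feature of Gabidulin's new codes that forces $\C+\sigma(\C)$ to collapse to a Gabidulin code, and it is what I would check most carefully; everything else is routine bookkeeping with Moore-matrix rows and an application of the two cited propositions.
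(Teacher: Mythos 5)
Your proposal is built on a claim that is false, so the argument collapses at its base case. You assert that $\Saut_1(\C)=\CGab{k+1}{\autom}{\alphaVec}$, justified by ``cancellations propagating across the full index range'' so that all pure powers $\alphaVec,\sigma(\alphaVec),\dots,\sigma^k(\alphaVec)$ are recovered. This does not happen. Write $g_i:=\sigma^i(\alphaVec)+\sigma^i(\eta)\sigma^{k+i}(\alphaVec)$ for the twisted generators. Precisely because the twist coefficients are $\sigma$-compatible, $\sigma(g_i)=g_{i+1}$; hence for $1\le i\le m-k-1$ the vector $g_i$ is a generator of \emph{both} $\C$ and $\sigma(\C)$, and differencing its two occurrences yields $0$, not a pure power. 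For these middle indices neither $\sigma^i(\alphaVec)$ nor $\sigma^{k+i}(\alphaVec)$ lies in $\C+\sigma(\C)$. The only pure powers one actually recovers are $\sigma^k(\alphaVec)$ (the shifted plain generator of $\sigma(\C)$), the plain generators $\sigma^{m-k}(\alphaVec),\dots,\sigma^{k-1}(\alphaVec)$ of $\C$, and $\alphaVec$ (from $g_0=\alphaVec+\eta\sigma^k(\alphaVec)\in\C$ once $\sigma^k(\alphaVec)$ is available, or equivalently from the wrap-around generator $\sigma^{m-k}(\alphaVec)+\sigma^{m-k}(\eta)\sigma^m(\alphaVec)\in\sigma(\C)$, where $\sigma^m(\alphaVec)=\alphaVec$). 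Concretely, whenever $k\le n-2$ the vector $g_1=\sigma(\alphaVec)+\sigma(\eta)\sigma^{k+1}(\alphaVec)$ lies in $\C\subseteq\Saut_1(\C)$ but not in $\CGab{k+1}{\autom}{\alphaVec}$: by Proposition~\ref{prop:rankMoore} the rows $\alphaVec,\dots,\sigma^{k+1}(\alphaVec)$ are linearly independent, so $\sigma^{k+1}(\alphaVec)\notin\CGab{k+1}{\autom}{\alphaVec}$. For the same reason your ``immediate'' reverse inclusion fails: the twisted generators involve Moore rows of index larger than $k$. So the two spaces merely share the dimension $k+1$ and are genuinely different; the base case of your induction is wrong, and with it the appeal to part~1 of Proposition~\ref{prop:GabidulinSeq} for the induction step.

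The correct route (the one the paper takes) avoids identifying $\Saut_1(\C)$ with a Gabidulin code altogether and shows instead that $\Saut_1(\C)=\C+\langle\sigma^k(\alphaVec)\rangle_{\Fqm}$, i.e., that the sum grows by exactly one dimension. Every generator of $\sigma(\C)$ except two already lies in $\C$ (the shared $g_1,\dots,g_{m-k-1}$ and the shared plain generators); the two exceptions are $\sigma^k(\alphaVec)$ and the wrap-around vector $\sigma^{m-k}(\alphaVec)+\sigma^{m-k}(\eta)\sigma^m(\alphaVec)$, and the latter is a linear combination of $\sigma^{m-k}(\alphaVec)\in\C$, $g_0\in\C$, and $\sigma^k(\alphaVec)$, so it contributes nothing new. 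Granting $\sigma^k(\alphaVec)\notin\C$ (which the paper asserts), this gives $s_1(\C)=k+1$, and the induction repeats the same bookkeeping: $\Saut_{i+1}(\C)=\Saut_i(\C)+\langle\sigma^{k+i}(\alphaVec)\rangle_{\Fqm}$, the space $\Saut_i(\C)$ retaining the not-yet-opened twisted generators and therefore never being a Gabidulin code on $\alphaVec$, while its dimension still increases by exactly one per step until it saturates at $n$. If you want to salvage your write-up, replace your claimed equality by this ``$\C$ plus one new vector'' description and run the induction on it directly.
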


\begin{proof}
$\C$ is generated by $\sigma^{i}(\alphaVec)+\sigma^{i}(\eta) \sigma^{k+i}(\alphaVec)$ for $0 \leq i < m-k$ and $\sigma^i(\alphaVec)$ for $m-k \leq i <k$.
$\sigma(\C)$ is generated by
\begin{itemize}
\item $\sigma^{i}(\alphaVec)+\sigma^{i}(\eta) \sigma^{k+i}(\alphaVec)$ for $1 \leq i < m-k$, which are already contained in $\C$,
\item $\sigma^i(\alphaVec)$ for $m-k < i \leq k$, which includes a new, linearly independent, vector $\sigma^k(\alphaVec)$, and
\item $\sigma^{m-k}(\alphaVec)+\sigma^{m-k}(\eta) \sigma^{m}(\alphaVec)$, which is a linear combination of $\sigma^{m-k}(\alphaVec)$ (which is in $\C$), $\sigma^k(\alphaVec)$ (which is in $\sigma(\C)$), and $\sigma^{k}(\alphaVec)+\sigma^{k}(\eta)\sigma^{m}(\alphaVec)$ (which is in $\sigma(\C)$).
\end{itemize}
Hence, their sum, $\mathcal{S}_1^{\sigma}(\C)$ has dimension $k+1$.
The rest of the proof follows inductively by the same arguments.
\end{proof}

\begin{corollary}
Gabidulin's new codes, with $N(\eta)\neq (-1)^{km}$, are equivalent to Gabidulin codes w.r.t.\ the same $\sigma$.
\end{corollary}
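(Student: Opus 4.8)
The plan is to read off the result directly from Theorem~\ref{thm:gabidulin_code} together with the known characterization of generalized Gabidulin codes proved in \cite{horlemann2015new}. That characterization states that an MRD code $\C \subseteq \Fm^n$ of dimension $k$ is a generalized Gabidulin code with respect to $\autom$ precisely when $\dim(\C \cap \autom(\C)) = k-1$. By the Grassmann identity this is the same as requiring $s_1^{\autom}(\C) = \dim(\C+\autom(\C)) = k+1$, which is exactly the value of the invariant that Theorem~\ref{thm:gabidulin_code} computes for Gabidulin's new codes.

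Concretely, I would argue in four short steps. First, the hypothesis $N(\eta)\neq(-1)^{km}$ guarantees, by \cite{gabidulin2017new}, that $\C := \CGabNew{k,\eta}{\autom,\mathsf{new}}{\alphaVec}$ is MRD, so that the characterization of \cite{horlemann2015new} is applicable. Second, Theorem~\ref{thm:gabidulin_code} with $i=1$ gives $s_1(\C) = \min\{k+1,n\} = k+1$. Third, the Grassmann identity turns this into the intersection form
\[
\dim(\C \cap \autom(\C)) = \dim\C + \dim\autom(\C) - \dim(\C + \autom(\C)) = 2k - s_1(\C) = k-1 .
\]
Fourth, the characterization of \cite{horlemann2015new} then forces $\C = \CGab{k}{\autom}{\betaVec}$ for some $\betaVec \in \Fm^n$ with $\Fq$-linearly independent entries and for the \emph{same} generator $\autom$. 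Thus $\C$ is not merely equivalent but actually equal to a classical Gabidulin code, giving the corollary.

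The point that needs care is the role of the MRD hypothesis. The proof of Theorem~\ref{thm:gabidulin_code} never uses the norm condition, so the identity $\dim(\C\cap\autom(\C))=k-1$ would hold even when $N(\eta)=(-1)^{km}$; yet in that case $\C$ is not MRD and hence cannot be a Gabidulin code. This shows that the dimension count alone is not sufficient, and that the MRD hypothesis is genuinely needed to invoke the characterization --- this is the main subtlety of the argument. The only remaining bookkeeping is to confirm that the admissible range $m-k\leq k$, $k\leq n-1$ for Gabidulin's new codes lies inside the parameter range for which the ``if'' direction of the characterization in \cite{horlemann2015new} is valid, and that the characterization returns a Gabidulin code built from $\autom$ itself rather than from another generator of $\Gal(\Fm/\Fq)$; both are immediate from the statement of that result.
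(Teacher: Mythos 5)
Your proof is correct and follows essentially the same route as the paper: combine Theorem~\ref{thm:gabidulin_code} (giving $s_1^{\autom}(\C)=k+1$) with the characterization from \cite{horlemann2015new}, the norm condition $N(\eta)\neq(-1)^{km}$ serving exactly to guarantee the MRD hypothesis that the characterization requires. The only cosmetic difference is that you state the criterion in its intersection form $\dim(\C\cap\autom(\C))=k-1$ and translate via the Grassmann identity, whereas the paper quotes it directly in the form $s_1(\C)=k+1$.
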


\begin{proof}
It was shown in \cite{horlemann2015new} that a linear MRD code of dimension $k$ is equivalent to a Gabidulin code (w.r.t.\ $\sigma$) if and only if $s_1(\C)=k+1$. The claim follows by Theorem~\ref{thm:gabidulin_code}.
\end{proof}

\section{Example and Numerical Results}
\label{sec:numerical_results}

\begin{example}
We determine the sequences $(s_i^{\sigma}(\C))_{i=0}^{n}$ for three examples of Gabidulin, Sheekey's twisted Gabidulin, and twisted Gabidulin codes.
Let $m = 24$, $n=12$, $k=5$, $\ell=1$, $\tVec = t=5$, and $\hVec =h=1$. Also, let the entries of $\alphaVec$ span $\mathbb{F}_{q^{12}}$ and choose $\etaVec = \eta \in \mathbb{F}_{q^{24}} \setminus \mathbb{F}_{q^{12}}$. Then, we have
\begin{align*}
(s_i^{\sigma}(\CGab{5}{\sigma}{\alphaVec}))_{i=0}^{n} &= (5,6,7,8,9,10,11,12,12,\dots,12), \\
(s_i^{\sigma}(\CTGabSheekey{5,\eta}{\sigma}{\alphaVec}))_{i=0}^{n} &= (5,7,8,9,10,11,12,12,\dots,12), \\
(s_i^{\sigma}(\CGab{5,\tVec,\hVec,\etaVec}{\sigma}{\alphaVec}))_{i=0}^{n} &=(5,8,10,11,12,12,\dots,12),
\end{align*}
so the three codes are inequivalent since their sequences differ.
\end{example}

Table~\ref{tab:counting_twisted_codes} presents lower bounds on the number of inequivalent twisted Gabidulin codes with $\ell=1$ twist for all lengths $n \leq 18$ and dimensions
 $2<k<n-2$, where the $\alpha_i$ span a field $\Fqn$ and we choose the smallest field, $\Fqm := \mathbb{F}_{q^{2n}}$, for which there is a sufficient condition for the codes to be MRD.
Furthermore, we fix an $\eta \in \Fqm \setminus \Fqn$.

Each table cell contains two values, $a$ / $b$. The number $a$ is a lower bound on the number of inequivalent twisted codes with $\ell=1$, $h$, $t$, and $\sigma = \cdot^{q^s}$ with
\begin{align*}
(s,h,t) \in \mathcal{I}_{m,n,k} := \{(s,h,t) \, : \, &1 \leq s < n, \, \gcd(s,m)=1, \\
&0\leq h <k, \, 1\leq t \leq n-k\}.
\end{align*}
The bound $a$ is derived by computing the sequences $\{s_{i}^{\sigma'}\}$, where $\sigma'$ ranges over all generators of the Galois group $\Gal(\Fqn/\Fq)$.
The number $b$ is defined as the maximal number of inequivalent codes $b = |\mathcal{I}_{m,n,k}|$.

For most of the presented parameters, a large fraction (up to almost $1/2$) of the parameters in $\mathcal{I}_{m,n,k}$ result in inequivalent codes. E.g., for $n=17$ and $k=8$, there are $576$ valid choices of $s$, $h$, and $t$. Among these choices, there are exactly $280$ codes with pairwise disjoint sequences ${s_{i}^{\sigma'}}$. Hence, almost half of the codes are pairwise inequivalent.

Intuitively, we expect that at most about $1/2$ (or even less) of all codes are inequivalent due to symmetries in $t$, $h$, and $\sigma$ yielding equivalent codes for different parameter choices (similarly to the symmetry argument in the proof of Theorem~\ref{thm:numberGabidulin}). Hence, it seems that some of the lower bounds in Table~\ref{tab:counting_twisted_codes} are close to being tight, but this must be investigated in more detail by studying symmetries in the code parameters.

\bibliographystyle{IEEEtran}
\bibliography{main}

\end{document}